\documentclass[runningheads]{llncs}
\usepackage{amsmath,amssymb,ifthen}
\usepackage{graphicx,rotating,paralist}
\usepackage[font=small]{subfig,caption}
\usepackage{placeins,rotating,hyperref}
\graphicspath{{img/}}

\newcommand{\rephrase}[3]{\noindent\textbf{#1~#2.}\hspace{1.5ex}\emph{#3}\medskip}
\newcommand{\algostep}[1]{\smallskip\noindent{\textit{#1}.}}

\newcommand{\ver}{arxiv}
\newcommand{\arxapp}[2]{\ifthenelse{\equal{\ver}{arxiv}}{#2}{#1}}

\author{Michael A. Bekos, Henry F\"orster, Michael Kaufmann}
\title{On Smooth Orthogonal and Octilinear Drawings: Relations, Complexity and Kandinsky Drawings\thanks{This work is supported by DFG grant Ka812/17-1.}}
\titlerunning{On Smooth Orthogonal and Octilinear Drawings}

\institute{
Wilhelm-Schickhard-Institut f\"ur Informatik, Universit\"at T\"ubingen, Germany
}

\begin{document}
\maketitle

\begin{abstract}
We study two variants of the well-known orthogonal drawing model: (i)~the smooth orthogonal, and (ii)~the octilinear. Both models form an extension of the orthogonal, by supporting one additional type of edge segments (circular arcs and diagonal segments, respectively).

For planar graphs of max-degree~$4$, we analyze relationships between the graph classes that can be drawn bendless in the two models and we also prove NP-hardness for a restricted version of the bendless drawing problem for both models. For planar graphs of higher degree, we present an algorithm that produces bi-monotone smooth orthogonal drawings with at most two segments per edge, which also guarantees a linear number of edges with exactly one segment.
\end{abstract}

\section{Introduction}
\label{sec:introduction}

Orthogonal graph drawing is an intensively studied and well established model~for drawing graphs. As a result, several efficient algorithms providing good aesthetics and good readability have been proposed over the years, see e.g.,~\cite{kant,kandinsky,liuEtAl,tamassia}. In such drawings, each vertex corresponds to a point on the Euclidean plane and each edge is drawn as a sequence of axis-aligned line segments; see Fig.~\ref{fig:introduction}.

Several research directions build upon this successful model. We focus on two models that have recently received attention: %
\begin{inparaenum}[(i)]
\item the \emph{smooth orthogonal}~\cite{smog1}, in which every edge is a sequence of axis-aligned segments and circular arc segments with common axis-aligned tangents (i.e., quarter, half or three-quarter arc segments), and
\item the \emph{octilinear}~\cite{octi}, in which every edge is a sequence of axis-aligned and diagonal (at $\pm 45^\circ$) segments.
\end{inparaenum}

Observe that both models extend the orthogonal by allowing one more type of edge-segments. The former was introduced with the aim of combining the artistic appeal of \emph{Lombardi drawings}~\cite{lombardi-1,lombardi-2} with the clarity of the orthogonal drawings. The latter, on the other hand, is primarily motivated by metro-map and map schematization applications (see, e.g.,~\cite{metromaps-1,octiNP,metromaps-2,metromaps-3}). Note that in the orthogonal and in the smooth orthogonal models, each edge may enter a vertex using one out of four available (axis-aligned) directions, called \emph{ports}. Thus both models support graphs of max-degree~$4$. In the octilinear model, each vertex has eight available ports and therefore one can draw graphs of max-degree~$8$.

For readability purposes, usually in such drawings one seeks to minimize the \emph{edge complexity}~\cite{gd-book-1,gd-book-2}, i.e., the maximum number of segments used for representing any edge. Also, when the input is a planar graph, one seeks for a corresponding planar drawing. Note that drawings with edge complexity~$1$ are also called \emph{bendless}. We refer to drawings with edge complexity $k$ as~\emph{$k$-drawings}; thus, by definition, orthogonal $k$-drawings have at most $k-1$ bends per edge.

\begin{figure}[t!]
	\centering
	\begin{minipage}[b]{.2\textwidth}
		\centering
		\subfloat[\label{fig:straight} {}]{
		\includegraphics[width=.8\textwidth,page=1]{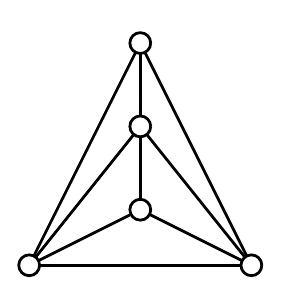}}
	\end{minipage}
	\hfil
	\begin{minipage}[b]{.2\textwidth}
		\centering
		\subfloat[\label{fig:orthogonal} {}]{
		\includegraphics[width=.8\textwidth,page=2]{introduction}}
	\end{minipage}
	\hfil
	\begin{minipage}[b]{.2\textwidth}
	\centering
		\subfloat[\label{fig:octilinear} {}]{
		\includegraphics[width=.8\textwidth,page=3]{introduction}}
	\end{minipage}
	\hfil
	\begin{minipage}[b]{.2\textwidth}
	\centering
		\subfloat[\label{fig:smooth} {}]{
		\includegraphics[width=.8\textwidth,page=4]{introduction}}
	\end{minipage}
	\caption{Different drawings of a planar graph of max-degree~$4$:
	(a)~straight-line,
	(b)~orthogonal $3$-drawing,
	(c)~octilinear $2$-drawing, and
	(d)~smooth orthogonal $2$-drawing.}
	\label{fig:introduction}
\end{figure}

\paragraph{Known results.} There exists a plethora of results for each of the aforementioned models; here we list existing results for drawings with low edge complexity.
\begin{itemize}[-]
\item All planar graphs of max-degree~$4$, except for the octahedron, admit orthogonal $3$-drawings; the octahedron is orthogonal $4$-drawable~\cite{kant,liuEtAl}. Minimizing the number of bends over all embeddings of a planar graph of max-degree~$4$ is $\mathcal{NP}$-hard~\cite{gargTamassia}. For a given planar embedding, however, finding a planar orthogonal drawing with minimum number of bends can be done in polynomial time by an approach, called \emph{topology-shape-metrics}~\cite{tamassia}, that is based on min-cost flow computations and works in three phases. Initially, a planar embedding is computed if not specified by the input. In the next phase, the angles and the bends of the drawing are computed, yielding an \emph{orthogonal representation}. In the last phase, the actual coordinates for the vertices and bends are computed.
\item All planar graphs of max-degree~$4$ (including the octahedron) admit smooth orthogonal $2$-drawings. Note that not all planar graphs of max-degree~$4$ allow for bendless smooth orthogonal drawings~\cite{smog1}, and that such drawings may require exponential area~\cite{smog2}. Bendless smooth orthogonal drawings are possible only for subclasses, e.g., for planar graphs of max-degree~$3$~\cite{perfectSmog} and for outerplanar graphs of max-degree~$4$~\cite{smog2}. It is worth mentioning that the complexity of the problem, whether a planar graph of max-degree~$4$ admits a bendless smooth orthogonal drawing, has not been settled (it is conjectured to be $\mathcal{NP}$-hard~\cite{smog2}).
\item All planar graphs of max-degree~$8$ admit octilinear $3$-drawings~\cite{slopes}, while planar graphs of max-degree~$4$ or $5$ allow for octilinear $2$-drawings~\cite{octi}. Bendless octilinear drawings are always possible for planar graphs of max-degree~$3$~\cite{latin}. Note that deciding whether an embedded planar graph of max-degree~$8$ admits a bendless octilinear drawing is $\mathcal{NP}$-hard~\cite{octiNP}. It is not, however, known whether this negative result applies for planar graphs of max-degree~$4$ or whether these graphs allow for a decision algorithm (in fact, there exist planar graphs of max-degree~$4$ that do not admit bendless octilinear drawings~\cite{octi-2}).
\end{itemize}

\paragraph{Our contribution} Motivated by the fact that usually one can ``easily'' convert an octilinear drawing of a planar graph of max-degree~$4$ to a corresponding smooth orthogonal one (e.g., by replacing diagonal edge segments with quarter circular arc segments; see Figs.~\ref{fig:octilinear}-\ref{fig:smooth} for an example), and vice versa, we study in Section~\ref{sec:relationships} inclusion-relationships between the graph-classes that admit such drawings. In Section~\ref{sec:nphardness}, we show that it is $\mathcal{NP}$-hard to decide whether an embedded planar graph of max-degree~$4$ admits a bendless smooth orthogonal or a bendless octilinear drawing, in the case where the angles between any two edges incident to a common vertex and the shapes of all edges are specified as part of the input (e.g., as in the last step of the topology-shape-metrics approach~\cite{tamassia}). Our proof is a step towards settling the complexities of both decision problems in their general form. Inspired from the \emph{Kandinsky model}~(see, e.g.,~\cite{kandinsky-2,kandinsky-3,kandinsky}) for drawing planar graphs of arbitrary degree in an orthogonal style, we present in Section~\ref{sec:kandinsky} two drawing algorithms that yield bi-monotone smooth orthogonal drawings of good quality. The first yields drawings of smaller area, which can also be transformed to octilinear with bends at $135^\circ$. The second yields larger drawings but guarantees that at most $2n-5$ edges are drawn with two segments. We conclude in Section~\ref{sec:conclusions} with open problems.

\begin{figure}[t!]
	\centering
	\includegraphics[scale=0.65,page=1]{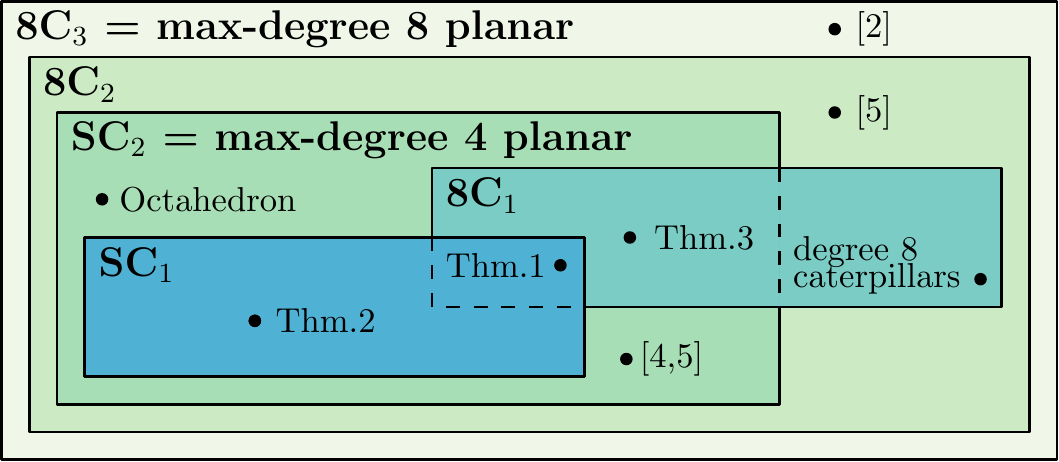}
	\caption{%
	Different inclusion-relationships: For $k \geq 1$, $SC_k$ and $8C_k$ correspond to the classes of graphs admitting smooth orthogonal and octilinear $k$-drawings, respectively.
	}
	\label{fig:graphClasses}
\end{figure}

\paragraph{Preliminaries} For graph theoretic notions refer to~\cite{graphtheory}. For definitions on planar graphs, we point the reader to~\cite{gd-book-1,gd-book-2}. We also assume familiarity with standard graph drawing techniques, such as the \emph{canonical ordering}~\cite{dFPP,cannonicalorder} and the \emph{shift-method} by de~Fraysseix, Pach and Pollack~\cite{dFPP}; see~\arxapp{\cite{arxiv}}{App.~\ref{app:preliminaries}} for more details.

\section{Relationships between Graph Classes}
\label{sec:relationships}

In this section, we consider relationships between the classes of graphs that admit smooth orthogonal $k$-drawings and octilinear $k$-drawings, $k \geq 1$, denoted as $SC_k$ and $8C_k$, respectively. Our findings are also summarized in Fig.~\ref{fig:graphClasses}.

By definition, $SC_1 \subseteq SC_2$ and $8C_1 \subseteq 8C_2 \subseteq 8C_3$ hold. Since each planar graph of max-degree~$8$ admits an octilinear $3$-drawing~\cite{slopes}, class $8C_3$ coincides with the class of planar graphs of max-degree~$8$. Similarly, class $SC_2$ coincides with the class of planar graph of max-degree~$4$, as these graphs admit smooth~orthogonal $2$-drawings~\cite{smog2}. This also implies that $SC_2 \subseteq 8C_2$, since each planar graph of max-degree~$4$ admits an octilinear $2$-drawing~\cite{octi}. The relationship $8C_2 \neq 8C_3$ follows from~\cite{octi}, where it was proven that there exist planar graphs of max-degree~$6$ that do not admit octilinear $2$-drawings. The relationship $SC_2 \neq 8C_2$ follows from~\cite{octi-2}, where it was shown that there exist planar graphs of max-degree~$5$ that admit octilinear $2$-drawings and no octilinear $1$-drawings, and the fact that planar graphs of max-degree~$5$ cannot be drawn in the smooth orthogonal model. The octahedron graph admits neither a bendless smooth orthogonal drawing~\cite{smog1} nor a bendless octilinear drawing~\cite{octi-2}. However, since it is of max-degree~$4$, it admits $2$-drawings in both models~\cite{smog2,octi}. Hence, it belongs to $8C_2 \cap SC_2 \setminus (8C_1 \cup SC_1)$. To prove that $8C_1 \setminus SC_2 \neq \emptyset$, observe that a caterpillar whose spine vertices are of degree 8 clearly admits an octilinear $1$-drawing, however, due to its degree it does not admit a smooth orthogonal.

To complete the discussion of the relationships of Fig.~\ref{fig:graphClasses}, we have to show~that $SC_1$ and $8C_1$ are incomparable. This is the most interesting part of our proof, as usually one can ``easily'' convert a bendless octilinear drawing of a planar graph of max-degree~$4$ to a corresponding bendless smooth orthogonal one (e.g., by replacing diagonal segments with quarter circular arcs), and vice versa; see, e.g., Figs.~\ref{fig:octilinear}-\ref{fig:smooth}. Since the endpoints of each edge of a bendless smooth orthogonal or octilinear drawing are along a line with slope $0$, $1$, $-1$ or~$\infty$, such conversions are in principle possible. Two difficulties that might arise are to preserve planarity and to guarantee that no two edges enter a vertex using the same port. Clearly, however, there exist infinitely many (even $4$-regular) planar graphs that admit both drawings in both models; see Fig.~\ref{fig:trains} and \arxapp{\cite{arxiv}}{App.~\ref{app:relationships}} for more details.

\begin{figure}[t!]
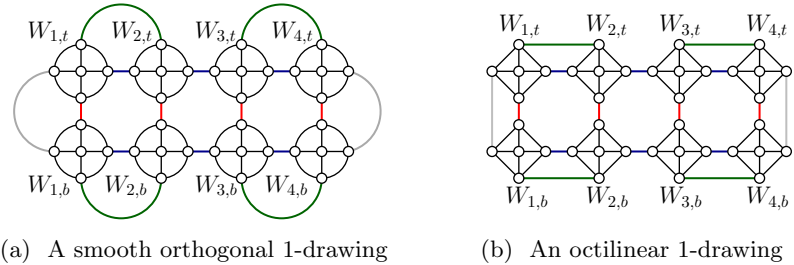

	\centering
	\begin{minipage}[b]{0.45\textwidth}
		\centering
		\subfloat[\label{fig:trainsSC1} {A smooth orthogonal $1$-drawing}]{
		\includegraphics[scale=0.63,page=3]{relationships}}
	\end{minipage}
	\hfil
	\begin{minipage}[b]{0.45\textwidth}
		\subfloat[\label{fig:trains8C1} {An octilinear $1$-drawing}]{
		\includegraphics[scale=0.63,page=2]{relationships}}
	\end{minipage}
	\caption{Illustrations for the proof of Theorem~\ref{theo:union}.}
\label{fig:trains}
\end{figure}

\newcommand{\union}{There is an infinitely large family of $4$-regular planar graphs that admit both bendless smooth orthogonal and bendless octilinear drawings.}
\begin{theorem}
\union
\label{theo:union}
\end{theorem}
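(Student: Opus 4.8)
The plan is to exhibit an explicit, parametrized family $\{G_n\}_{n\ge 1}$ built by \emph{repetition}, so that a single bendless drawing in each model can be ``tiled'' to arbitrary length. Concretely, I would take the graph underlying Fig.~\ref{fig:trains} as one ``car'' of a train and glue $n$ translated copies of it side by side along a horizontal axis, closing the construction at the two ends (by attaching constant-size cap gadgets, or equivalently by identifying the two extreme boundaries into a cyclic strip) so that every vertex ends up with degree exactly~$4$. First I would verify the two structural properties that do not depend on any drawing: that $G_n$ is $4$-regular (each interior vertex sees two neighbours inside its own car and two in the adjacent cars, while the closure supplies the missing incidences at the ends), and that $G_n$ is planar (the cars are stacked linearly and share only boundary vertices, so a union of planar cars remains planar). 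Since the vertex count grows with $n$, the $G_n$ are pairwise non-isomorphic, and we already obtain an infinitely large family.

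Next I would produce the two required bendless drawings by making them \emph{periodic}. For the octilinear drawing (Fig.~\ref{fig:trains8C1}) I would place the vertices of the $i$-th car at the coordinates of the first car shifted by $(i\cdot w,0)$ for a fixed horizontal period $w$; since translation preserves segment slopes, it then suffices to check on a single car (plus the closure) that every edge is drawn as one axis-aligned or $\pm45^\circ$ segment and that the (at most eight) ports around each vertex are pairwise distinct. For the smooth orthogonal drawing (Fig.~\ref{fig:trainsSC1}) I would translate a unit drawing in the same way, but here the verification is stronger: each edge must be a single quarter-, half-, or three-quarter-arc (or a straight segment) whose tangents at both endpoints are axis-aligned and agree with the assigned ports, and since every vertex has degree~$4$ all four ports $\{N,E,S,W\}$ are used, leaving no freedom. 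I would therefore fix, once and for all, a port-and-tangent assignment on the unit cell that renders every edge a legal single arc, and argue that translation copies this assignment verbatim.

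The main obstacle is simultaneous realizability at the \emph{seams} and at the \emph{closure}: within the interior of the cars the periodicity argument trivializes everything, but the edges crossing from car~$i$ to car~$i+1$ and those incident to the closing gadgets must be drawn correctly in \emph{both} models at once, in the smooth model as arcs with compatible endpoint tangents and in the octilinear model as straight diagonal or axis segments, all while keeping the ports at each endpoint distinct and preserving planarity. I would defuse this by choosing the caps and the period $w$ so that every inter-car edge is horizontal, hence a straight segment in the octilinear model and a degenerate ``arc'' (axis-aligned tangents matching horizontal ports) in the smooth one, thereby pushing all genuinely curved or diagonal behaviour into the constant-size interior of a car where it can be checked by inspection of Fig.~\ref{fig:trains}. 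Once the unit cell and the closure are verified, correctness for all $n$ follows by the translation argument, completing the proof; see \arxapp{\cite{arxiv}}{App.~\ref{app:relationships}} for the detailed coordinates.
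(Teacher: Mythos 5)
Your proposal is correct and takes essentially the same route as the paper: the paper's proof of Theorem~\ref{theo:union} likewise builds $G_k$ by concatenating copies of a fixed constant-size unit (wheels on five vertices arranged in two horizontal rows), closes the two ends with extra edges so that every vertex attains degree exactly~$4$, and certifies both bendless drawings via the periodic, translation-invariant layouts of Fig.~\ref{fig:trains}. Your tiling-plus-cap construction and unit-cell/seam verification is precisely the argument the paper delegates to that figure as a certificate.
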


\noindent In the next two theorems we show that $SC_1$ and $8C_1$ are incomparable.

\begin{theorem}
There is an infinitely large family of $4$-regular planar graphs that admit bendless smooth orthogonal drawings but no bendless octilinear drawings.
\label{theo:smoothAndNotOcti}
\end{theorem}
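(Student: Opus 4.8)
The plan is to exploit the single structural feature that separates the two bendless models at max-degree~$4$: in a bendless octilinear drawing every edge $uv$ is realized by a straight segment, so $u$ and $v$ must be \emph{octilinearly aligned}, i.e., they share one of the four quantities $x$, $y$, $x-y$ or $x+y$; whereas in a bendless smooth orthogonal drawing only the ports (tangents) at $u$ and $v$ are axis-aligned, and the connecting arc is free to absorb an arbitrary chord direction. Hence my target is a $4$-regular planar graph whose system of octilinear-alignment constraints is infeasible but which nevertheless admits an arc-based drawing. Since both properties are inherited by disjoint unions (a bendless octilinear drawing of $G_1 \sqcup G_2$ restricts to one of each part, while smooth drawings of the parts can be placed side by side), it suffices to produce one such gadget $H$; the infinite family is then $H$, $H \sqcup H$, $H \sqcup H \sqcup H, \dots$, or, if a connected family is preferred, a chain of copies of $H$ joined by degree-preserving connectors.

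First I would construct $H$ as a cyclic arrangement (a ring) of quadrilateral faces in which each ring vertex additionally carries two edges toward an inner hub and an outer rim, so that every vertex has degree exactly~$4$ and the graph stays planar. The role of the hub and rim edges is to pin down, at each ring vertex, which ports remain available for the two ring edges. For the octilinear impossibility I would translate each edge into the invariant it preserves: a slope-$0$ edge keeps $y$ constant, a slope-$\infty$ edge keeps $x$ constant, a slope-$(+1)$ edge keeps $x-y$ constant, and a slope-$(-1)$ edge keeps $x+y$ constant. Summing the increments of a suitable linear functional (for instance $x+y$) around the ring must give~$0$ by closure; the aim is to design the ring so that, for \emph{every} slope assignment compatible with the fixed cyclic port order at the vertices, all ring edges push this functional consistently in one direction, contradicting closure. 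Intuitively, the hub and rim edges force the two ring edges at each vertex onto the same side, making the ring strictly monotone and hence unable to close up.

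Then I would give the smooth drawing explicitly. Here the obstruction evaporates: placing the ring vertices on a convex curve and using the four ports $N,E,S,W$ once per vertex, I realize each edge by a quarter, half, or three-quarter arc, so that the chord directions are decoupled from the ports and the ring closes geometrically with no contradiction. Checking port compatibility and planarity of this layout is routine, and it mirrors the constructions used for the graphs of Theorem~\ref{theo:union} and Figure~\ref{fig:trains}.

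The main obstacle is the octilinear-impossibility step. Unlike the smooth case, octilinear drawings retain full placement freedom, so a purely local argument cannot work: the infeasibility must arise from a global cycle of constraints, and I must rule out \emph{all} admissible slope patterns rather than exhibit a single bad one. I expect this to reduce to a finite but careful case analysis, best organized as a propagation argument that tracks the forced equalities among $x$, $y$, $x-y$, $x+y$ around the ring and shows that every pattern respecting the cyclic port order either repeats a port at some vertex, violates the embedding, or yields a strictly monotone (hence non-closing) functional. Designing the hub and rim gadget so that this monotonicity is genuinely forced, while keeping the graph $4$-regular, planar and smooth-drawable, is the delicate point on which the whole construction rests.
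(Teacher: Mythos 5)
There is a genuine gap, and it starts with your central premise. You claim that in a bendless smooth orthogonal drawing ``the connecting arc is free to absorb an arbitrary chord direction,'' so that only the octilinear model imposes alignment constraints. This is false: quarter arcs with axis-aligned tangents have chords of slope $\pm 1$, half circles have axis-parallel chords, and three-quarter arcs again have slope-$\pm 1$ chords, so in \emph{both} bendless models the endpoints of every edge lie on a common line of slope $0$, $1$, $-1$ or $\infty$ (the paper states this explicitly when discussing why conversions between the models are ``in principle possible''). Consequently, any infeasibility argument based purely on alignment and closure of a linear functional such as $x+y$ around a cycle applies verbatim to both models and cannot separate them; and your proposed smooth drawing --- vertices ``on a convex curve'' with chord directions ``decoupled from the ports'' --- is not a valid bendless smooth orthogonal drawing at all. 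What actually distinguishes the models is which \emph{ports} a given chord direction consumes (a slope-$1$ chord uses a diagonal port in the octilinear model but one horizontal and one vertical port in the smooth model) and the angular resolution at vertices ($45^\circ$ versus $90^\circ$ between consecutive edges). Your plan gestures at port-order compatibility, but the gadget $H$ is never constructed, and you yourself defer the impossibility proof as ``the delicate point on which the whole construction rests'' --- so the core of the theorem is missing.

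For contrast, the paper's proof is short and local rather than global: it exhibits a concrete graph $C$ (with a single degree-$2$ vertex $c$) together with a bendless smooth orthogonal drawing, and shows octilinear impossibility by an angle-counting argument. Suppressing $c$ makes $C$ triconnected, so its embedding is essentially unique and every octilinear drawing has an outerface of length at most $5$; each degree-$4$ vertex on that outerface must keep two ports pointing inside, which in the octilinear model caps its outer angle at $225^\circ$; the resulting total falls short of the angle sum that a closed outer boundary of that length requires, so some outerface edge must bend. The infinite $4$-regular family is then obtained by chaining copies of $C$ through the degree-$2$ vertices (note that your alternative of joining $4$-regular copies ``by degree-preserving connectors'' cannot work as stated, since a $4$-regular gadget has no free ports to attach connectors to; the paper's degree-$2$ vertex exists precisely for this purpose, while plain disjoint unions would also satisfy the theorem as phrased). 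If you want to salvage your approach, the constraint system you analyze must track port consumption and the $45^\circ$ angular grid, not endpoint alignment, since the latter is identical in the two models.
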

\begin{proof}
Consider the planar graph $C$ of Fig.~\ref{fig:4gonOctahedronComponentSC1}, which is drawn bendless smooth orthogonal. We claim that $C$ admits no bendless octilinear drawing. If one substitutes its degree-$2$ vertex (denoted by~$c$ in Fig.~\ref{fig:4gonOctahedronComponentSC1}) by an edge connecting its two neighbors, then the resulting graph is triconnected, which admits an unique embedding (up to the choice of its outerface; see Figs.~\ref{fig:4gonOctahedronComponentSC1}-\ref{fig:4gonOctahedronComponentEmbedding2}). Now, observe that the outerface of any octilinear drawing of graph $C$ (if any) has length at most $5$ (Constraint~$1$). In addition, each vertex of this outerface (except for $c$, which is of degree~$2$) must have two ports pointing in the interior of this drawing, because every vertex of $C$ is of degree~$4$ except for $c$. This implies that the angle formed by any two consecutive edges of this outerface is at most $225^\circ$, except for the pair of edges incident to $c$ (Constraint~$2$). But if we want to satisfy both constraints, then at least one edge of this outerface must be drawn with a bend; see Fig.~\ref{fig:4gonOctahedronComponentNotOcC1}. Hence, graph $C$ does not admit a bendless octilinear drawing.

\begin{figure}[t!]
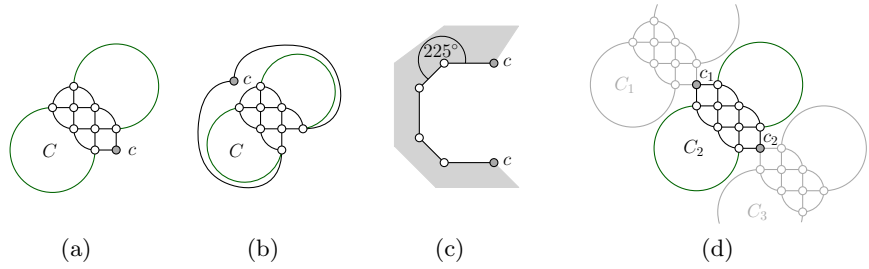

	\centering
	\begin{minipage}[b]{0.18\textwidth}
		\centering
		\subfloat[\label{fig:4gonOctahedronComponentSC1} {}]{
		\includegraphics[scale=0.5,page=5]{relationships}}
	\end{minipage}
	\hfil
	\begin{minipage}[b]{0.18\textwidth}
		\centering
		\subfloat[\label{fig:4gonOctahedronComponentEmbedding2} {}]{
		\includegraphics[scale=0.5,page=6]{relationships}}
	\end{minipage}
	\hfil
	\begin{minipage}[b]{0.18\textwidth}
		\centering
		\subfloat[\label{fig:4gonOctahedronComponentNotOcC1} {}]{
		\includegraphics[scale=0.5,page=7]{relationships}}
	\end{minipage}
	\hfil
	\begin{minipage}[b]{0.36\textwidth}
		\centering
		\subfloat[\label{fig:4gonOctahedronChain} {}]{
		\includegraphics[scale=0.5,page=4]{relationships}}
	\end{minipage}
	\caption{%
	Illustrations for the proof of Theorem~\ref{theo:smoothAndNotOcti}.}
\label{fig:4gonOctahedron}
\end{figure}

Based on graph $C$, for each $k \in \mathbb{N}_0$ we construct a $4$-regular planar graph $G_k$ consisting of $k + 2$ biconnected components $C_1,\ldots,C_{k+2}$ arranged in a chain; see Fig.~\ref{fig:4gonOctahedronChain} for the case $k=1$. Clearly, $G_k$ admits a bendless smooth orthogonal drawing for any $k$. Since the end-components of the chain (i.e., $C_1$ and $C_{k+2}$)~are isomorphic to $C$, $G_k$ does not admit a bendless octilinear drawing for any $k$.\qed
\end{proof}

\begin{theorem}
There is an infinitely large family of $4$-regular planar graphs that admit bendless octilinear drawings but no bendless smooth orthogonal drawings.
\label{theo:octiAndNotSmooth}
\end{theorem}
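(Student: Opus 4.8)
The plan is to mirror the construction of Theorem~\ref{theo:smoothAndNotOcti} with the two models exchanged. First I would exhibit a single building block: a planar graph $C'$ that is $4$-regular except for one vertex $c'$ of degree~$2$, together with an explicit bendless octilinear drawing. The idea is to place the $45^\circ$ ports of the octilinear model exactly where the smooth orthogonal model, having only the four axis-aligned ports $N,E,S,W$, will be unable to route the incident edges without a bend. As before, for each $k\in\mathbb{N}_0$ the family member $G'_k$ is obtained by arranging $k+2$ copies $C'_1,\dots,C'_{k+2}$ of $C'$ in a chain, in the same spirit as Fig.~\ref{fig:4gonOctahedronChain}, identifying the degree-$2$ vertices of consecutive copies so that the result is $4$-regular; drawing the octilinear blocks side by side keeps the whole chain bendless octilinear. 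It then remains to prove that $C'$ itself admits no bendless smooth orthogonal drawing, since the two end components of every $G'_k$ are isomorphic to $C'$ and inherit this impossibility.

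To make the combinatorics rigid I would first fix the embedding exactly as in the companion proof: replacing the degree-$2$ vertex $c'$ by an edge between its two neighbours yields a triconnected graph, whose planar embedding is unique up to the choice of the outer face. Hence any bendless smooth orthogonal drawing of $C'$ realizes this embedding, and I may reason about a designated short face $f$ (the outer face, or a small inner face incident to $c'$) all of whose boundary vertices except $c'$ have degree~$4$. The core is a smooth-orthogonal analogue of Constraints~$1$ and~$2$. At a degree-$4$ vertex all four ports are occupied, so the two edges bounding any incident face leave the vertex in tangent directions that are exactly $90^\circ$ apart; equivalently, the interior tangent angle of each face at such a vertex is either $90^\circ$ or $270^\circ$. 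Moreover, since the drawing is bendless, each edge is a single axis-aligned segment or a quarter/half/three-quarter arc, so its endpoints lie on a line of slope $0,1,-1$ or $\infty$ and its curvature contributes a multiple of $90^\circ$ to the turning along $\partial f$. Traversing $\partial f$, whose total turning must be $\pm 360^\circ$, I would track the accumulated rotation of the tangent direction: every vertex contributes $\pm 90^\circ$ and every edge a multiple of $90^\circ$, and in addition the ports assigned to the two endpoints of each shared edge must be compatible. The goal is to show that $C'$ can be chosen so that these $90^\circ$-quantized constraints, together with the ports already consumed by the interior of $C'$, leave no way to close $f$ without reusing a port at some vertex, i.e.\ without drawing some edge with two segments, contradicting bendlessness.

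The main obstacle, relative to the octilinear direction, is precisely the extra freedom of the smooth orthogonal model: the clean count ``interior angle $\le 225^\circ$'' that drove Theorem~\ref{theo:smoothAndNotOcti} has no immediate counterpart here, because circular arcs can inject curvature between the vertices, so a pure vertex-angle sum never closes the argument on its own. The real work therefore lies in designing $C'$ so that the rigid $90^\circ$ port separation at its degree-$4$ vertices interacts with the fixed embedding to over-determine the tangent directions along $\partial f$, leaving the arc curvatures no admissible values; this is where I expect the argument to be delicate and to require a careful case analysis of the possible port assignments along $f$. Once this local impossibility is established for $C'$, it passes to the two end components of each $G'_k$, so none of the graphs $G'_k$ admits a bendless smooth orthogonal drawing. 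This produces the claimed infinite family and, combined with Theorem~\ref{theo:smoothAndNotOcti}, shows that $SC_1$ and $8C_1$ are incomparable.
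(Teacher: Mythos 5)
There is a genuine gap: the core of the theorem is missing from your proposal. You never exhibit the building block $C'$, and you never prove that any concrete graph admits no bendless smooth orthogonal drawing --- you explicitly defer this as ``the real work'' and ``delicate''. That deferred step \emph{is} the theorem; everything you do supply (chain assembly, inheritance of impossibility by end components) is routine. Worse, the tool you propose for the impossibility --- quantized turning of the tangent direction around a face --- cannot succeed on its own, as you yourself half-concede: quarter, half and three-quarter arcs contribute $\pm 90^\circ$, $\pm 180^\circ$ or $\pm 270^\circ$ of curvature, so the total-turning equation of $\pm 360^\circ$ around a face can always be balanced at the level of angles; any obstruction must be \emph{metric}, coming from the facts that both endpoints of a bendless edge lie on a line of slope $0$, $\pm 1$ or $\infty$ and that the drawing must be planar. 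This is exactly how the paper's proof works (Lemma~\ref{lem:necklaceGemNotOcti}): it starts from the classification in~\cite{smog2} of the only two bendless smooth orthogonal drawings of a triangular face with no free port pointing inside, deduces that the wheel $W_5$ inside the block $B$ of Fig.~\ref{fig:necklaceGem} has only two admissible drawings, extends these to five drawings after adding $t_1,t_2$, then enumerates the candidate positions of $p_1,p_2$ as intersections of rays of slopes $\{0,\pm 1,\infty\}$ emanating from $t_1$ and $t_2$, discarding placements that bury a free port, reuse a port, or break planarity, and finally shows that $q_1,q_2$ cannot be added. No angle-sum argument of the kind you sketch appears, because none suffices.

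There is also a structural flaw in your assembly step. An end component of a chain may present \emph{any} of its faces containing the cut vertex as the face toward the rest of the graph, so your plan needs impossibility of $C'$ for every such choice of outerface. In Theorem~\ref{theo:smoothAndNotOcti} this came for free, since every face of $C$ has length at most $5$ and the octilinear angle bound applies uniformly; in the smooth orthogonal direction the paper establishes impossibility only for the single outerface $(p_1,q_1,p_2,q_2)$ of $B$, and each of $q_1$, $q_2$ must moreover keep two free ports on that outerface. It is precisely to force this configuration that the paper arranges $2k+4$ copies of $B$ in a \emph{cycle} (Fig.~\ref{fig:necklaceConstruction}) rather than a chain: planarity of a cyclic arrangement guarantees that at least one copy is embedded with the required outerface, with the two degree-$2$ vertices attaching to the rest of the graph from outside. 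Your proposal needs either this cycle trick or a proof of impossibility for all admissible outerfaces of $C'$; as written it contains neither.
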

\begin{proof}[sketch]
Consider the planar graph $B$ of Fig.~\ref{fig:necklaceGem}, which is drawn bendless octilinear. Graph $B$ has two separation pairs (i.e., $\{t_1,t_2\}$ and $\{p_1,p_2\}$ in Fig.~\ref{fig:necklaceGem}).

\begin{figure}[h]
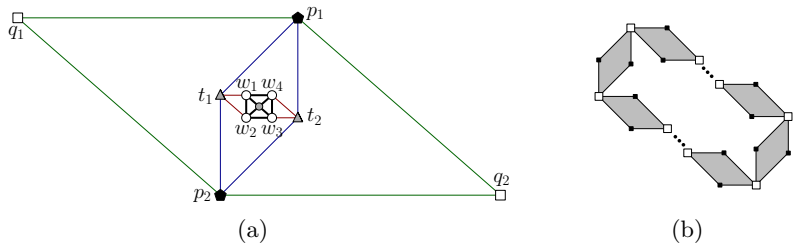

	\centering
	\subfloat[\label{fig:necklaceGem} {}]{
	\includegraphics[scale=0.38,page=8]{relationships}}
	\hfil
	\subfloat[\label{fig:necklaceConstruction} {}]{
	\includegraphics[scale=0.38,page=9]{relationships}}
	\caption{Illustrations for the proof of Theorem~\ref{theo:octiAndNotSmooth}.}
	\label{fig:necklace}
\end{figure}

Based on graph $B$, for each $k \in \mathbb{N}_0$ we construct a $4$-regular planar graph $G_k$ consisting of $2k + 4$ copies of $B$ arranged in a cycle; see Fig.~\ref{fig:necklaceConstruction} where each copy of $B$ is drawn as a gray-shaded parallelogram. By construction, $G_k$ admits a bendless octilinear drawing for any $k$. By planarity at least one copy of graph $B$ must be embedded with the outerface of Fig.~\ref{fig:necklaceGem}. However, if we require the outerface of $B$ to be the one of Fig.~\ref{fig:necklaceGem}, then all possible planar embeddings of $B$ are isomorphic to the one of Fig.~\ref{fig:necklaceGem}. We exploit this property in \arxapp{\cite{arxiv}}{App.~\ref{app:relationships}} to show that $B$ does not admit a bendless smooth~orthogonal drawing with this outerface. The detailed proof is based on an exhaustive consideration of all bendless smooth orthogonal drawings of subgraphs of $B$, which we incrementally augment by adding more vertices~to~them. Thus, for any~$k$, graph $G_k$ does not admit a bendless smooth orthogonal drawing.\qed
\end{proof}

\section{$\mathcal{NP}$-hardness Results}
\label{sec:nphardness}

In this section, we study the complexity of the bendless smooth orthogonal and octilinear drawing problems. As a first step towards addressing the complexity of both problems for planar graphs of max-degree~$4$ in general, here we make an additional assumption. We assume that the input, apart from an embedding, also specifies a \emph{smooth orthogonal} or an \emph{octilinear representation}, which are defined analogously to the orthogonal ones: %
\begin{inparaenum}[(i)]
\item the angles between consecutive edges incident to a common vertex in the cyclic order around it (given by the planar embedding) are specified, and
\item the \emph{shape} of each edge (e.g.,~straight-line, or quarter-circular arc) is also specified.
\end{inparaenum}
In other words, we assume that our input is analogous to the one of the last step of the topology-shape-metrics approach~\cite{tamassia}. 

\begin{theorem}\label{thm:np-smooth}
Given a planar graph $G$ of max-degree~$4$ and a smooth orthogonal representation $\mathcal{R}$, it is $\mathcal{NP}$-hard to decide whether $G$ admits a bendless smooth orthogonal drawing preserving $\mathcal{R}$.
\end{theorem}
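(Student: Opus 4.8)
The plan is to first reformulate the problem so that the role of the representation $\mathcal{R}$ becomes transparent. Since $\mathcal{R}$ fixes the embedding, the angles at every vertex (hence the port, i.e.\ the tangent direction, used by each edge-end), and the shape of every edge, each edge is pinned down up to a single positive parameter: a straight segment is axis-aligned, a quarter- or three-quarter-arc forces the vector between its two endpoints to have slope $\pm 1$ with a prescribed orientation, and a half-arc forces this vector to be axis-aligned. Thus a drawing preserving $\mathcal{R}$ is obtained exactly by choosing a positive length (respectively radius) for every edge. The resulting constraints on the vertex coordinates are \emph{linear}; in particular, summing them around each face yields a system of linear equalities (the face-closure conditions), while positivity of the radii and lengths gives strict linear inequalities. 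Consequently, if these were the only constraints, feasibility would be a linear program and the problem would lie in $\mathcal{P}$. The two sources of genuine combinatorial difficulty that I would exploit are (a)~that a half-arc joining two \emph{equal} ports is not oriented by $\mathcal{R}$ — its free endpoint may be placed on either side, so the sign of its displacement is a free binary choice not determined by the ports, angles or rotation system — and (b)~that the chosen parameters must realize the prescribed embedding \emph{without crossings}, even though arcs sweep out half-disks and the diagonal arcs rigidly couple the horizontal and vertical extents $|\Delta x|=|\Delta y|$ of the faces they bound.

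Building on this, I would reduce from \textsc{Planar 3-Sat} (or a planar monotone variant), so that the whole construction stays embeddable in the plane. Each Boolean variable is represented by a gadget containing a distinguished half-arc whose free sign (upward vs.\ downward) encodes \emph{true} vs.\ \emph{false}; the remaining edges of the gadget are made rigid, their lengths forced by face-closure together with those of their neighbours, so that the only residual freedom is this single bit. A \emph{wire} gadget, a chain of such rigid faces, copies the bit to several locations, and a negation gadget flips it. Finally, each clause is realized as a face whose boundary collects one literal-wire from each of its three variables, with lengths and radii arranged so that the face-closure equation together with non-crossing can be satisfied if and only if at least one incoming literal is in its satisfying state; that is, the clause face closes up planarly exactly when the clause is satisfied. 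The resulting graph has max-degree~$4$ and, by construction, comes equipped with a smooth orthogonal representation of size polynomial in the formula, so that a bendless drawing preserving $\mathcal{R}$ exists if and only if the formula is satisfiable.

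The main obstacle will be the design and the rigidity analysis of the gadgets. Concretely, I must exhibit, for the variable, wire, negation and clause gadgets, explicit max-degree-$4$ subgraphs together with admissible angle assignments and edge shapes, and then prove that (i)~each gadget has exactly the intended discrete degree of freedom — its remaining lengths are forced, so that no ``unintended'' continuous rescaling can silently satisfy a clause that should be false — and (ii)~the binary choices compose consistently across wires and clauses while the entire drawing stays crossing-free with the prescribed rotation system and outer face. Proving the ``only if'' direction (a crossing-free realization yields a satisfying assignment) is the delicate part, since it requires ruling out all alternative placements permitted by the free radii and half-arc signs; here the rigid coupling $|\Delta x|=|\Delta y|$ of the diagonal arcs, which forbids independent horizontal/vertical stretching of a face, is the tool that pins the behaviour down. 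Once the smooth orthogonal gadgets are in place, the octilinear counterpart (the following theorem) is obtained by an analogous construction, replacing circular arcs with $\pm 45^\circ$ diagonal segments and adapting the variable gadget to that model, since the directional and closure constraints are of the same nature.
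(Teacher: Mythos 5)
Your framing is on target in one important respect: after fixing $\mathcal{R}$, the face-closure conditions are linear in the edge lengths/radii and positivity gives strict linear inequalities, so the hardness must come from somewhere beyond this LP. However, your primary discrete mechanism --- claim (a), that a half-arc joining two equal ports has a free sign not determined by $\mathcal{R}$ --- does not survive the paper's definition of a smooth orthogonal representation. The paper explicitly takes $\mathcal{R}$ to be the analogue of a \emph{complete} orthogonal representation, i.e., the input to the last (compaction) step of topology-shape-metrics: there, the ``shape'' of an edge is not just its segment type but its turning behaviour (the analogue of Tamassia's signed bend list). A half-arc leaving the east port of $u$ and turning left by $180^\circ$ places $v$ above $u$; turning right places $v$ below; these are two \emph{different} shapes, both of which must be consistent with the prescribed vertex angles via the face-closure turning condition (total turning $\pm 360^\circ$ per face). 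You are right that the \emph{ports} alone do not distinguish the two placements, but the representation specifies more than ports, so the binary degree of freedom on which your variable, wire and negation gadgets rest does not exist in this model. The only genuinely non-convex constraint left is planarity/non-crossing --- your source (b) --- and any reduction must extract all of its combinatorics from there.

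That is exactly what the paper does, and it is also the part your proposal leaves entirely open (you defer ``the design and the rigidity analysis of the gadgets,'' which is the whole proof). The paper reduces from plain $3$-SAT (crossings are handled by a trivial rectangle gadget, so planar $3$-SAT is not needed). Truth values are encoded in \emph{lengths}: a variable gadget built around a quarter-arc ``adder'' forces $\ell(x)+\ell(\overline{x})=3\,\ell(u)$; a \emph{parity gadget} places two blocks of vertices inside a gap of width $3\,\ell(u)$ so that the drawing is crossing-free if and only if $|\ell(x)-\ell(\overline{x})|>\tfrac{\sqrt{3}}{2}\,\ell(u)$ --- this single non-crossing argument is where the continuous length variables are forced into two near-Boolean regimes, i.e., where all discreteness comes from; and the clause gadget is another quarter-arc adder with one ``free'' edge, forcing $\ell(a)+\ell(b)+\ell(c)>4\,\ell(u)$, which fails precisely when all three literals sit in the \texttt{false} regime. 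Without either (i) a valid discrete mechanism replacing your half-arc signs, or (ii) concrete gadgets realizing a disjunction through non-crossing as above, your proposal is a plan rather than a proof, and its stated key idea is incompatible with the problem as defined in the theorem.
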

\begin{proof}Our reduction is from the well-known $3$-SAT problem~\cite{GareyJ79}. Given a formula $\varphi$ in conjunctive normal form, we construct a graph $G_\varphi$ and a smooth orthogonal representation $\mathcal{R}_\varphi$, such that $G_\varphi$ admits a bendless smooth orthogonal drawing $\Gamma_\varphi$ preserving $\mathcal{R}_\varphi$ if and only if $\varphi$ is satisfiable; see also Fig.~\ref{fig:example}.

The main ideas of our construction are: %
\begin{inparaenum}[(i)]
\item specific straight-line edges in~$\Gamma_\varphi$ transport \emph{information} encoded in their length,
\item rectangular faces of $\Gamma_\varphi$ propagate the edge length of one side to its opposite, and
\item for a face composed~of two straight-line edges and a quarter circle arc, the straight-line edges are of~same length, which allows us to change the \emph{direction} in which the information~``flows''.
\end{inparaenum}

\begin{sidewaysfigure}
	\centering
	\includegraphics[scale=0.2,page=14]{}
	\caption{Drawing $\Gamma_\varphi$ for $\varphi = (a \vee b \vee c) \wedge (\overline{a} \vee
	\overline{b} \vee c)$ and the assignment $a = \texttt{false}$ and $b = c = \texttt{true}$.}
	\label{fig:example}
\end{sidewaysfigure}

\algostep{Variable gadget} For each variable~$x$ of $\varphi$, we introduce a gadget; see Figs.~\ref{fig:variable-1}-\ref{fig:variable-2}. The bold-drawn quarter circle arc  ensures that the sum of the edge lengths to its left is the same as the sum of the edge lengths to its bottom (refer to the edges with gray endvertices). As ``input'' the gadget gets three edges of unit length~$\ell(u)$. This ensures that $\ell(x) + \ell(\overline{x}) = 3 \cdot \ell(u)$ holds for the ``output literals'' $x$ and $\overline{x}$, where $\ell(x)$ and $\ell(\overline{x})$ denote the lengths of two edges representing $x$ and~$\overline{x}$.

\begin{figure}[t]
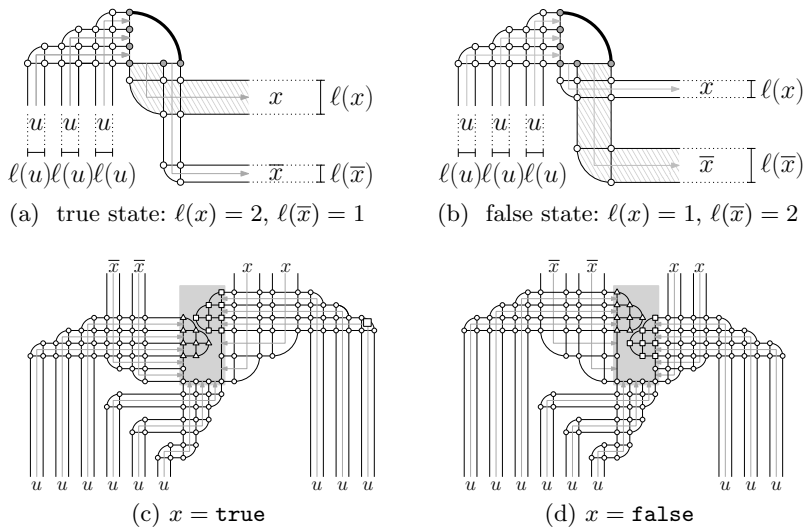

\centering
	\subfloat[\label{fig:variable-1} {true state: $\ell(x)=2$, $\ell(\overline{x})=1$}]{
	\includegraphics[scale=0.4,page=2]{smogNPFigures}}
	\hfil
	\subfloat[\label{fig:variable-2} {false state: $\ell(x)=1$, $\ell(\overline{x})=2$}]{
	\includegraphics[scale=0.4,page=1]{smogNPFigures}}

	\subfloat[\label{fig:parity-1}{$x=\mathtt{true}$}]{
	\includegraphics[scale=0.3,page=6]{smogNPFigures}}
	\hfil
	\subfloat[\label{fig:parity-2}{$x=\mathtt{false}$}]{
	\includegraphics[scale=0.3,page=7]{smogNPFigures}}
	\caption{The (a)-(b)~variable, and (c)-(d)~the parity gadgets; gray-colored arrows show the information ``flow''.}
	\label{fig:gadgets}
\end{figure}

To introduce our concept, assume that the lengths of all straight-line edges are integral and at least~$1$. If we could require $\ell(u) = 1$, then $\ell(x), \ell(\overline{x}) \in \{1,2\}$. This would allow us to encode the assignment $x=\mathtt{true}$~with $\ell(x) = 2$ and $\ell(\overline{x}) = 1$, and the assignment $x=\mathtt{false}$ with $\ell(x) = 1$ and $\ell(\overline{x}) = 2$.
However, if we cannot avoid, e.g., that $\ell(u) = 2$, then the variable gadget would not prevent us from setting $\ell(x) = \ell(\overline{x}) = 3$, which means that $x$ and $\overline{x}$ are ``half-true''. We solve this issue by the so-called \emph{parity gadget}, that allows us to relax the integral constraint and to ensure that $\ell(x), \ell(\overline{x}) \in \{\ell(u) + \varepsilon, 2 \ell(u) - \varepsilon \}$, for $0 < \varepsilon << \ell(u)$.

\algostep{Parity gadget} For each variable~$x$ of $\varphi$, $G_\varphi$ has a gadget (see Figs.~\ref{fig:parity-1}-\ref{fig:parity-2}), which results in overlaps in $\Gamma_\varphi$, if the values of $\ell(x)$ and $\ell(\overline{x})$ do not differ significantly. The central part of this gadget is a ``\emph{vertical gap}'' of width $3 \cdot \ell(u)$ (shaded in gray in Figs.~\ref{fig:parity-1}-\ref{fig:parity-2}) with two blocks of vertices (triangular- and square-shaped in Figs.~\ref{fig:parity-1}-\ref{fig:parity-2}) pointing inside the gap. Each block defines two square-shaped faces and three faces of length~$3$, each formed by two straight-line edges and a quarter circle arc. Depending on the choice of $\ell(x)$ and $\ell(\overline{x})$, one of the blocks may be located above the other. If $\ell(x) \approx \ell(\overline{x})$, however, we can observe that the two blocks are not far enough apart from each other, which leads to overlaps. Using elementary geometry, we prove in \arxapp{\cite{arxiv}}{App.~\ref{app:np-hardness}} that overlaps can be avoided if and only if $|\ell(\overline{x}) - \ell(x)|> \sqrt{3}/2 \cdot \ell(u)\approx 0.866 \cdot \ell(u)$, which implies:
that $\ell(x), \ell(\overline{x}) \in (0,~1.067 \cdot \ell(u)] \cup [1.933 \cdot \ell(u),~3)\text{, i.e., }\varepsilon < 0.067 \cdot \ell(u)$.

\algostep{Clause gadget} For each clause of $\varphi$ with literals $a$, $b$ and $c$, we introduce a~gadget, which is illustrated in Fig.~\ref{fig:clauseGadget}. The bold-drawn quarter circle arc of Fig.~\ref{fig:clauseGadget} compares two sums of information. From the righthand side, four edges of unit length ``enter'' the arc. Observe that there is also a \emph{free edge} (marked with an asterisk in Fig.~\ref{fig:clauseGadget}), which also contributes to the sum but can be stretched independently of any other edge. Hence, the sum of edge lengths on the righthand side of this arc is $> 4 \cdot \ell(u)$. The three literals ``enter'' at the bottom; the sum here is $\ell(a) + \ell(b) + \ell(c)$. Combining both, we obtain that $\ell(a) + \ell(b) + \ell(c) > 4 \cdot \ell(u)$ must hold. This implies that not all $a$, $b$ and $c$ can be $\texttt{false}$, since in this case $\ell(a) + \ell(b) + \ell(c) = 3 \cdot (\ell(u)+\varepsilon) < 4 \cdot \ell(u)$. 

\begin{figure}[t]
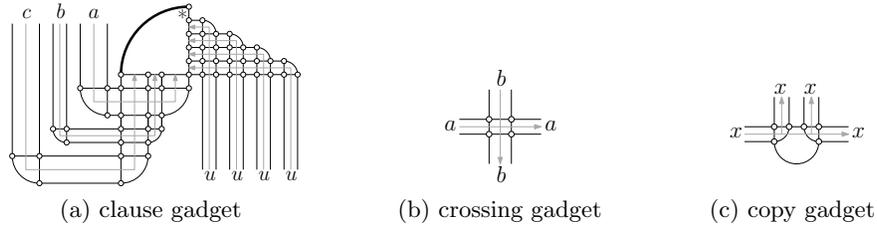

	\centering
	\subfloat[\label{fig:clauseGadget}{clause gadget}]{
	\includegraphics[scale=0.32,page=4]{smogNPFigures}}
	\hfil
	\subfloat[\label{fig:crossingGadget}{crossing gadget}]{
	\includegraphics[scale=0.35,page=8]{smogNPFigures}}
	\hfil
	\subfloat[\label{fig:copyGadget}{copy gadget}]{
	\includegraphics[scale=0.35,page=9]{smogNPFigures}}
	\caption{Different gadgets; gray-colored arrows show the information ``flow''.}
	\label{fig:gadgets}
\end{figure}

\algostep{Auxiliary gadgets} The \emph{crossing gadget} just consists of a rectangle and is used to allow two flows of information to cross each other; see Fig.~\ref{fig:crossingGadget}. The \emph{copy gadget} takes an information and creates three copies of this information; see Fig.~\ref{fig:copyGadget}. This is because both quarter circular arcs of the copy gadget must have the same radius in the presence of the half circular arc of the copy gadget. Finally, the \emph{unit length gadget} is a single edge, which we assume to be of length $\ell(u)$.

\smallskip We now describe our construction; see Fig.~\ref{fig:example}: $G_\varphi$ contains one unit length gadget, which is copied several times using the copy gadget (the number of copies depends linearly on the number of variables $\nu$ and clauses $\mu$ of $\varphi$). For each variable of $\varphi$, $G_\varphi$ has a variable gadget and a parity gadget, each of~which is connected to different copies of the unit length gadget. For each clause of~$\varphi$, $G_\varphi$ has a clause gadget, which has four connections to different copies of the unit length gadget. We compute $\mathcal{R}_\varphi$ as follows. We place the variable~gadget of each variable $x$ above and to the left of its parity gadget and we connect the output literals of the variable gadget of $x$ with its parity gadget through a copy gadget. We place the variable and the parity gadgets of the $i$-th variable below and to the right of the corresponding ones of the $(i-1)$-th variable. 
We place each clause gadget to the right of the sketch constructed so far, so that the gadget of the $i$-th clause is to the right of the $(i-1)$-th clause. 
This allows us to connect copies of the output literals of the variable gadget of each variable with the clause gadgets that contain it, so that all possible crossings (which are resolved using the crossing gadget) appear above the clause gadgets. More precisely, if a clause contains a literal of the $i$-th variable, we have a crossing with the literals of all variables with indices $(i+1)$ to $\nu$. Hence, for each clause we add $O(\nu)$ crossing and three copy gadgets.  Note that all copy gadgets of the unit length gadget lie below all variable, parity, and clause gadgets. The obtained~representation $\mathcal{R}_\varphi$ conforms with the one of Fig.~\ref{fig:example}. The construction can be done in $O(\nu  \mu)$~time.

To complete the proof, assume that $G_\varphi$ admits a bendless smooth orthogonal drawing $\Gamma_\varphi$ preserving $\mathcal{R}_\varphi$.
For each variable $x$ of $\varphi$, we set $x$ to $\texttt{true}$ if and only if $\ell(x) \geq 1.933 \cdot \ell(u)$. Since for each clause~$(a \vee b \vee c)$ of $\varphi$ we have that $\ell(a) + \ell(b) + \ell(c) > 4 \cdot \ell(u)$, at least one of $a$, $b$ and $c$ must be $\texttt{true}$. Hence, $\varphi$ admits a truth assignment. For the opposite direction, based on a truth assignment of $\varphi$, we can set, e.g., $\ell(x)=1.95$ and $\ell(\overline{x})=1.05$ for each variable $x$, assuming that $\ell(u)=1$. Then, arranging the variable and the clause gadgets of $G_\varphi$ as in Fig.~\ref{fig:example} yields a bendless smooth orthogonal drawing $\Gamma_\varphi$ preserving $\mathcal{R}_\varphi$.\qed
\end{proof}

\begin{remark}
The special case of our problem, in which circular arcs are not present, is known as \emph{HV-rectilinear planarity testing}~\cite{DBLP:conf/gd/ManuchPPT10}. As opposed to our problem, HV-rectilinear planarity testing is polynomial-time solvable in the fixed embedding setting~\cite{DBLP:conf/latin/DurocherF0M14} (and becomes $\mathcal{NP}$-hard in the variable embedding setting~\cite{DBLP:conf/gd/DidimoLP14}).
\end{remark}

\begin{theorem}\label{thm:np-octilinear}
Given a planar graph $G$ of max-degree~$4$ and an octilinear representation $\mathcal{R}$, it is $\mathcal{NP}$-hard to decide whether $G$ admits a bendless octilinear drawing preserving $\mathcal{R}$.
\end{theorem}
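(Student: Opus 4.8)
The plan is to reduce from 3-SAT exactly as in the proof of Theorem~\ref{thm:np-smooth}, reusing the same high-level construction that encodes Boolean information in edge lengths, but replacing every quarter circle arc by a diagonal segment oriented at $45^\circ$. The single geometric fact that made the smooth orthogonal reduction work is that a face bounded by a horizontal edge, a vertical edge and a quarter arc forces the horizontal and the vertical edge to have equal length. The octilinear analogue is immediate: a $45^\circ$ diagonal segment satisfies $|\Delta x| = |\Delta y|$, so a diagonal closing off a horizontal and a vertical edge enforces exactly the same equal-extent constraint. Hence the mechanism by which the diagonals ``redirect'' the information flow and compare sums of lengths carries over; the only change to the representation is that the incident vertices now use diagonal ports (and correspondingly different angles) instead of the axis-aligned ports used by the arcs. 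I would construct a graph $G_\varphi$ and an octilinear representation $\mathcal{R}_\varphi$ in this way.

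With this substitution, several gadgets transfer directly. The crossing gadget is a rectangle and the unit length gadget is a single edge, both of which exist unchanged in the octilinear model. The variable gadget and the clause gadget rely only on the equal-extent property of the bold arc, so replacing that arc by a diagonal yields octilinear gadgets enforcing the same identities, in particular $\ell(x) + \ell(\overline{x}) = 3 \cdot \ell(u)$ and $\ell(a) + \ell(b) + \ell(c) > 4 \cdot \ell(u)$ for every clause. The final equivalence is then identical to the smooth case: a bendless octilinear drawing preserving $\mathcal{R}_\varphi$ yields a satisfying assignment by setting $x$ to $\texttt{true}$ whenever $\ell(x)$ is large, and conversely a satisfying assignment is realized by choosing the literal lengths as in Theorem~\ref{thm:np-smooth} and drawing the arranged gadgets with diagonals in place of arcs.

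Two points require genuine care and are where I expect the work to lie. First, the parity gadget's overlap threshold $\sqrt{3}/2 \cdot \ell(u)$ was computed from circular-arc geometry; after replacing the three length-$3$ arc faces by triangular faces with a diagonal hypotenuse, the overlap condition must be recomputed from scratch. I expect a different separation constant, but the qualitative behaviour to persist: overlaps occur precisely when $\ell(x) \approx \ell(\overline{x})$, so the gadget still forces $\ell(x),\ell(\overline{x})$ away from the common value $3 \cdot \ell(u)/2$ and keeps the encoding essentially binary. The main obstacle, however, is the copy gadget, whose smooth orthogonal version couples the radii of two quarter arcs through a half circle arc. A half turn has no bendless octilinear realisation, since rotating a direction by $180^\circ$ using only axis-aligned and $45^\circ$ segments necessarily introduces bends. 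I would therefore redesign this gadget for the octilinear setting, replacing the arc-coupling by a short chain of parallelograms (axis-aligned sides together with $45^\circ$ diagonals) whose opposite-side equalities force all three output copies to share a common length. Verifying that such a replacement stays planar, bendless, and consistent with $\mathcal{R}_\varphi$ is the step I expect to be the most delicate.
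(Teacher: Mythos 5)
Your proposal follows essentially the same route as the paper's proof: reduce from 3-SAT by reusing the construction of Theorem~\ref{thm:np-smooth}, replace circular arcs by diagonal segments, and treat the parity gadget as the one place needing genuinely new geometry. Two differences are worth recording. First, the separation constant you leave open is computed in the paper (Lemma~\ref{lem:partity-octilinear}): the octilinear parity gadget additionally subdivides the unit-length edges into three edges of length $\frac{1}{3}\ell(u)$, and the threshold becomes $|\ell(\overline{x})-\ell(x)|>\frac{5}{6}\ell(u)\approx 0.833\cdot\ell(u)$, i.e., $\varepsilon<0.084\cdot\ell(u)$; your qualitative prediction (overlaps occur precisely when $\ell(x)\approx\ell(\overline{x})$, so the encoding stays essentially binary) is exactly what that lemma establishes. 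Second, your objection concerning the copy gadget rests on a correct observation---a half circle leaves and enters its two endpoints through the same compass direction, which no single straight octilinear segment can do, so the paper's blanket phrase ``simply replacing arcs with diagonal segments'' is imprecise for that one arc---but your parallelogram-chain redesign is heavier machinery than necessary: since the reduction constructs the representation $\mathcal{R}_\varphi$ itself, the edge drawn as a half circle in the smooth orthogonal gadget can be prescribed in the octilinear representation as an axis-aligned straight segment, which enforces the identical collinearity constraint on its endpoints, with the angles at those endpoints adjusted accordingly; the diagonal ports freed by turning the quarter arcs into diagonals make this port reassignment feasible, so the coupling of the two output lengths survives without redesigning the gadget. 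Finally, note that neither of your two flagged work items (the new parity constant and the copy-gadget replacement) is actually carried out, so your text is a correct plan in the spirit of the paper's own---also sketched---proof rather than a finished argument.
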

\begin{proof}[sketch]
Except for the parity gadget, we can adjust to the octilinear model simply by replacing arcs with diagonal segments; for details see \arxapp{\cite{arxiv}}{App.~\ref{app:preliminaries}}. In this case the parity gadget guarantees $|\ell(x)-\ell(\overline{x})| > {5}/{6} \cdot \ell(u) \approx 0.833 \cdot \ell(u)$, which implies that $\varepsilon < 0.084 \cdot \ell(u)$.\qed
\end{proof}

\section{Bi-Monotone Drawings}
\label{sec:kandinsky}

In this section, we study variants of the \emph{Kandinsky} drawing model~\cite{kandinsky-2,kandinsky-3,kandinsky}, which forms an extension of the orthogonal model to graphs of degree greater than $4$. In this model, the vertices are represented as squares, placed on a \emph{coarse grid}, with multiple edges attached to each side of them aligned on a \emph{finer grid}.

The Kandinsky model allows for natural extensions to both smooth orthogonal and octilinear models. We are aware of only one preliminary result in this~direction: 
A linear time drawing algorithm is presented in~\cite{smog1} for the production of smooth orthogonal $2$-drawings for planar graphs of arbitrary degree in quadratic area, in which all vertices are on a line $\ell$ and the edges are drawn either as half circles (above or below~$\ell$), or as two consecutive half~circles one above and one below $\ell$ (i.e., the latter ones are of complexity~$2$, but they are at most $n-2$).

For an input maximal planar graph $G$ (of arbitrary degree), our goal is to~construct a smooth orthogonal (or an octilinear) $2$-drawing for $G$ with the following aesthetic benefits over the aforementioned drawing algorithm: %
\begin{inparaenum}[(i)]
\item the vertices are distributed evenly over the drawing area, and
\item each edge is \emph{bi-monotone}~\cite{DBLP:conf/wg/FulekPSS11}, i.e., $xy$-monotone.
\end{inparaenum}
We achieve our goal at the cost of~slightly more edges drawn with complexity~$2$ or at the cost of increased drawing area (but still polynomial).

Our first approach is a modification of the \emph{shift-method}~\cite{dFPP}. Based on a canonical order $\pi =(v_1,\ldots,v_n)$ of $G$, we construct a planar smooth orthogonal $2$-drawing $\Gamma$ of $G$ in the Kandinsky model, as follows. We place $v_1$, $v_2$ and $v_3$ at $(0,0)$, $(2,0)$ and $(1,1)$. Hence, we can draw $(v_1,v_2)$ as a horizontal segment, and each of $(v_1,v_3)$ and $(v_2,v_3)$ as a quarter circular~arc. We also color $(v_1,v_3)$ blue and $(v_2,v_3)$ green.
For $k=4,\ldots,n$, assume that a smooth orthogonal $2$-drawing $\Gamma_{k-1}$ of the subgraph $G_{k-1}$ of $G$ induced by $v_1,\ldots,v_{k-1}$ has been constructed, in which each edge of the outerface $C_{k-1}$ of $\Gamma_{k-1}$ is drawn as a quarter circular arc, whose endvertices are on a line with slope $\pm 1$, except for edge $(v_1,v_2)$, which is drawn as a horizontal segment (called \emph{contour condition} in the shift-method; see Fig.~\ref{fig:shift-smooth}). Each of $v_1,\ldots,v_{k-1}$ is also associated with a so-called \emph{shift-set}, which for $v_1$, $v_2$ and $v_3$ are singletons containing only themselves.

Let $w_1,\ldots,w_p$ be the vertices of $C_{k-1}$ from left to right in $\Gamma_{k-1}$, where $w_1=v_1$ and $w_p=v_2$. Let $(w_\ell,\ldots,w_r)$, $1 \leq \ell < r \leq p$, be the neighbors of $v_k$ from left to right along $C_{k-1}$ in $\Gamma_{k-1}$. As in the shift-method, our algorithm first translates each vertex in $\cup_{i=1}^{\ell} S(w_i)$ one unit to the left and each vertex in $\cup_{i=r}^{p} S(w_i)$ one unit to the right, where $S(v)$ is the shift-set of $v \in V$. During this translation, $(w_{\ell},w_{\ell+1})$ and $(w_{r-1},w_r)$ acquire a horizontal segment each (see the bold edges of Fig.~\ref{fig:shift-smooth}). We place $v_k$ at the intersection of line $L_\ell$ with slope $+1$ through $w_\ell$ with line $L_r$ with slope $-1$ through $w_r$ (dotted in Fig.~\ref{fig:shift-smooth}) and we set the shift-set of $v_k$ to $\{v_k\} \cup_{i=\ell+1}^{r-1}S(w_i)$, as in the shift-method. We draw each of $(w_\ell,v_k)$ and $(v_k,w_r)$ as a quarter circular arc.
For $i=\ell+1,\ldots,r-1$, $(w_i,v_k)$ has a vertical line-segment that starts from $w_i$ and ends either at $L_\ell$ or $L_r$ and a quarter circle arc from the end of the previous segment to $v_k$. Hence, the contour condition is satisfied. We color $(w_\ell,v_k)$ blue, $(v_k,w_r)$ green and the remaining edges of $v_k$ red; see also~\cite{felsner,Schnyder90}. Observe that each blue and green edge~consists of a quarter circular arc and a horizontal segment (that may have zero~length), while a red edge consists of a vertical segment and a quarter circular arc (that may have zero radius). We are now ready to state our first theorem; the analogous of Theorem~\ref{theo:niceSC2} for the octilinear model is shown in \arxapp{\cite{arxiv}}{App.~\ref{app:kandinsky}}.

\begin{figure}[t!]
	\centering
	\includegraphics[width=.47\textwidth,page=3]{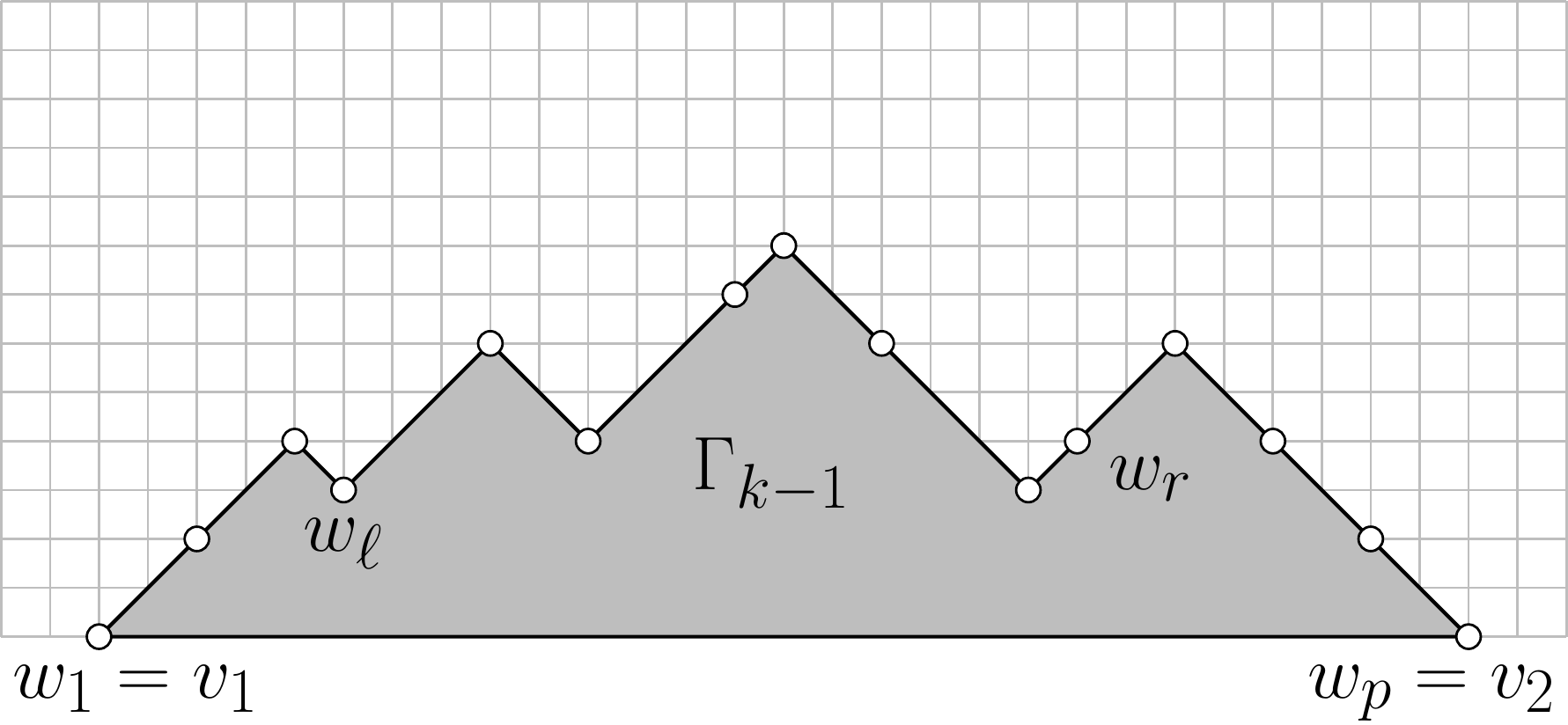}
	\hfill
	\includegraphics[width=.47\textwidth,page=4]{canonical}
	\caption{%
	Illustration of the contour condition (left) and the placement of $v_k$ (right).}
	\label{fig:shift-smooth}
\end{figure}

\begin{theorem}
A maximal planar $n$-vertex graph admits a bi-monotone planar smooth orthogonal $2$-drawing in the Kandinsky model, which requires $O(n^2)$ area and can be computed in $O(n)$ time.
\label{theo:niceSC2}
\end{theorem}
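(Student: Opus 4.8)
The plan is to establish, by induction on $k$, the loop invariant that $\Gamma_k$ is a planar smooth orthogonal $2$-drawing of $G_k$ in the Kandinsky model obeying the stated contour condition, and then to read the area and time bounds off the resulting coordinates. The base case $k=3$ holds by the explicit placement of $v_1,v_2,v_3$. For the inductive step, the starting point is that the shift-method already certifies that the straight-line drawing in which each of our edges is replaced by the chord between its endvertices is planar, and that the covered vertices $w_{\ell+1},\ldots,w_{r-1}$ lie in the downward-opening wedge with apex $v_k$ bounded by the lines $L_\ell$ and $L_r$.

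I would then show that replacing chords by our arcs and segments introduces no crossing, by checking that each drawn edge is confined to an axis-parallel region determined by its chord: a blue or green edge lies in the square having its slope-$\pm1$ chord as a diagonal, extended horizontally by the segment that later shifts may attach to it, while a red edge $(w_i,v_k)$ stays inside the wedge, running up the vertical line through $w_i$ until it meets $L_\ell$ or $L_r$ and then along an arc that hugs $v_k$. Since the horizontal translations are precisely the ones prescribed by the shift-method, they preserve non-crossing exactly as in the original argument, and the arcs inherit planarity from the interior-disjointness of these regions.

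Next I would verify the local conditions. Every quarter arc is built with its centre at a corner of the relevant bounding box, so it has axis-aligned tangents at both endpoints; hence each edge leaves and enters its endvertices along grid directions, as smooth orthogonality demands, and the several red edges that reach $v_k$ horizontally on a common side are separated on the Kandinsky fine grid of the square of $v_k$, receiving distinct ports. Edge complexity two is immediate, since each edge is one arc plus at most one straight segment. Bi-monotonicity then follows type by type: a quarter arc is $xy$-monotone, and in each case the attached segment is traversed consistently with the arc---a blue or green edge is non-decreasing in $x$ throughout and monotone in $y$, while a red edge is monotone in $y$ throughout and in $x$ across its arc---so every edge is $xy$-monotone.

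For the metric claims I would invoke the standard shift-method analysis: the vertices occupy integer positions in an $O(n)\times O(n)$ grid, each arc fits inside its cell because its radius is bounded by the coordinate span of its endvertices, and the multiple edges incident to a vertex are placed on the Kandinsky fine grid within that vertex's cell, giving total area $O(n^2)$; implemented with the usual relative-offset representation of the shift-method and a single traversal to accumulate the shifts, the construction runs in $O(n)$ time, as each of the $O(n)$ edges is realised in $O(1)$. The step I expect to be the main obstacle is the planarity argument of the inductive step, and specifically the claim that the red arcs reaching $v_k$ from a common side are mutually non-crossing: because they are nested arcs of different radii that all terminate at $v_k$, one must check that their nesting is strict below the apex and that none of them escapes the wedge bounded by $L_\ell$ and $L_r$, which is where the geometry of the arcs---rather than the combinatorics inherited from the shift-method---has to be controlled.
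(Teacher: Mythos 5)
Your overall route coincides with the paper's: the same modified shift-method invariant (contour condition), planarity by induction on the canonical order, bi-monotonicity read off the three edge shapes, $O(n)$ time via the Chrobak--Payne style offset implementation, and $O(n^2)$ area from the shift-method grid. However, the one step you yourself single out as the crux contains a false geometric claim, not merely an unverified one. A red edge $(w_i,v_k)$ cannot ``stay inside the wedge'' bounded by $L_\ell$ and $L_r$: smoothness forces its quarter arc to meet the vertical segment at the point $p_i\in L_\ell$ with a \emph{vertical} tangent, which pins the center of that arc to $(x(v_k),y(p_i))$ and radius $r_i=x(v_k)-x(p_i)$. This is the quarter circle that bulges up and to the left, and it lies strictly \emph{above} $L_\ell$ except at its two endpoints $p_i$ and $v_k$; the blue arc, whose chord lies on $L_\ell$, bulges outside the wedge as well (symmetrically on the right with $L_r$). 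So every one of these arcs escapes the wedge, the ``interior-disjointness'' of your wedge-plus-squares decomposition does not hold, and the final check you propose (``that none of them escapes the wedge'') would fail if carried out.

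The correct argument --- which is what the paper's terse assertion that ``no two edges incident to $v_k$ cross'' amounts to --- is a nesting argument rather than a wedge argument. All arcs reaching $v_k$ from the left (the blue arc and the red arcs ending on $L_\ell$) lie on circles centered on the vertical line through $v_k$ and passing through $v_k$; hence these circles are pairwise internally tangent at $v_k$ and the arcs are strictly nested elsewhere, with the blue arc outermost since $w_\ell$ is the lowest point used on $L_\ell$. One further checks that the vertical segment of the $i$-th red edge, which tops out at $p_i$, stays below every more exterior arc $j$ (this reduces to $\sqrt{r_j^2-r_i^2}\ge r_j-r_i$, which holds for $r_j\ge r_i$), and that the outermost (blue) arc is confined to the axis-aligned square with diagonal $w_\ell v_k$ --- a region kept empty of $\Gamma_{k-1}$ by the contour condition; symmetrically for the right side. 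With that substitution, the rest of your argument (distinct fine-grid ports at $v_k$, bi-monotonicity per edge type, the treatment of shifts, and the time and area bounds) goes through essentially as in the paper.
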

\begin{proof}
Bi-monotonicity follows by construction.
The time complexity follows from~\cite{DBLP:journals/ipl/ChrobakP95}. Planarity is proven by induction. Drawing $\Gamma_3$ is planar by construction. Assuming that $\Gamma_{k-1}$ is planar, we observe that no two edges incident to $v_k$ cross in $\Gamma_k$. Also, these edges do not cross edges of $\Gamma_{k-1}$. Since the radii of the arcs of the edges incident to vertices that are shifted remain unchanged and since edges incident to vertices in the shift-sets retain their shape, drawing $\Gamma_k$ is planar.\qed
\end{proof}

We reduce the number of edges drawn with complexity~$2$ in two steps.~%
\begin{inparaenum} [({S.}1)]
\item \label{s:h-stretch} We stretch the drawing horizontally (by employing appropriate vertical cuts; \arxapp{see, e.g.,~\cite{4M}}{refer to App.~\ref{app:preliminaries}}) to eliminate the vertical segments of all red edges with a circular arc segment of non-zero radius.
\item \label{s:v-stretch} We stretch the drawing vertically, to guarantee that the edges of a spanning tree (i.e., $n-1$) are drawn with complexity~$1$.
\end{inparaenum}

For Step~S.\ref{s:h-stretch}, we assume that each blue and green edge has a horizontal~segment (that may be of zero length). Consider a red edge $(u,v)$ with a vertical segment of length $\delta$ and assume w.l.o.g.\ that $u$ is to the right and above $v$\arxapp{}{; see Fig.~\ref{fig:h-stretch}}. If we shift $u$ by $\delta$ units to the right, then $(u,v)$ can be drawn  as a quarter circular arc. If the shift is by more than $\delta$ units, then a horizontal segment is needed. Since all  edges incident to $u$ that are drawn below $u$ enter $u$ from its left or from its right side, the shift of $u$ cannot introduce crossings between them.

We eliminate the vertical segments of all red edges with a circular arc segment of non-zero radius, as follows. As long as there exist such edges, we choose the one, call it $(u,v)$, whose vertical segment has the largest length $\delta$, and assume that $u$ is to the right and above $v$. We eliminate the vertical segment of $(u,v)$ using a vertical cut $L$ at $x(u)- \varepsilon$, for small $\varepsilon>0$. Since $L$ crosses several edges, shifting all vertices to the right of $L$ by $\delta$ to the right has the following effects. By the choice of~$(u,v)$, the vertical segments of all red edges crossed by $L$ are eliminated; note that this may introduce new horizontal segments. The horizontal  segment of each blue and green edge crossed by $L$ is elongated by $\delta$. Both imply that no edge crossings are introduced. Hence, by the termination of our algorithm all edges with vertical segments are of complexity~$1$.

Step~\ref{s:h-stretch} ensures that the $x$-distance of adjacent vertices is at least as large as their $y$-distance (unless they are connected by vertical edges). Based on this property, in Step~\ref{s:v-stretch} we compute new $y$-coordinates for the vertices in the sequence of the canonical ordering $\pi$, keeping their $x$-coordinates unchanged. First, we set $y(v_1) = y(v_2) = 0$. For each $k=3,\ldots,n$, we set $y(v_k) = \max_{w\in\{w_\ell,\ldots,w_r\}}\{y(w) + \max\{\Delta_x(v_k, w),1\} \}$, where $w_\ell,\ldots,w_r$ are the neighbors of $v_k$ in $\Gamma_{k-1}$, i.e., $v_k$ is placed above $w_\ell,\ldots,w_r$ in $\Gamma_{k-1}$, such that one of its edges (the one of the maximum; call it $(v_k,w^\ast)$) is drawn with complexity~$1$; as a quarter circle arc or as a vertical edge depending on whether the $x$- distance of $v_k$ and $w^\ast$ is non-zero or not. Since $(v_k,w^\ast)$ is the edge that must be stretched the most in order to ensure that it is drawn with complexity~$1$, for all other edges incident to $v_k$ in $G_k$, the $y$-distance of their endpoints is at least as large as their corresponding $x$-distance. Hence, they are drawn as vertical segments followed by quarter circular arcs (that may have zero radius). We are now ready to state our second theorem.

\newcommand{\evenNicerSCTwo}{A maximal planar $n$-vertex graph admits a bi-monotone planar smooth orthogonal $2$-drawing with at least $n-1$ edges with complexity~$1$ in the Kandinsky model, which requires $O(n^4)$ area and can be computed in $O(n^2)$ time.}
\begin{theorem}
\evenNicerSCTwo
\label{theo:evenNicerSC2}
\end{theorem}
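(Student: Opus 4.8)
The plan is to take the drawing $\Gamma$ produced for Theorem~\ref{theo:niceSC2} and then apply the two post-processing steps S.\ref{s:h-stretch} and S.\ref{s:v-stretch} exactly as described above, so that the statement reduces to verifying four claims about the resulting drawing: that it is still a planar, bi-monotone smooth orthogonal $2$-drawing in the Kandinsky model, that at least $n-1$ of its edges have complexity~$1$, that it fits in $O(n^4)$ area, and that the whole computation takes $O(n^2)$ time.

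First I would settle correctness. Planarity and the smooth-orthogonal $2$-drawing property are preserved by each step: S.\ref{s:h-stretch} only performs vertical cuts and rightward shifts, which (as argued above) never introduce crossings and leave every edge with either a quarter arc plus a horizontal segment or a pure vertical/arc segment; S.\ref{s:v-stretch} only raises $y$-coordinates while keeping $x$-coordinates fixed, after which—by the choice of $y(v_k)$—every non-tree edge incident to $v_k$ has $\Delta_y \ge \Delta_x$ and is drawn as a vertical segment followed by a quarter arc. Bi-monotonicity then follows edge-by-edge: each edge has one of the two shapes ``vertical segment then quarter arc'' or ``quarter arc then horizontal segment,'' both of which are simultaneously $x$- and $y$-monotone. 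I would carry planarity through inductively by reusing the shift-set invariants from the proof of Theorem~\ref{theo:niceSC2}.

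Next I would exhibit the $n-1$ complexity-$1$ edges. These are $(v_1,v_2)$ together with, for each $k=3,\ldots,n$, the edge $(v_k,w^\ast)$ realizing the maximum in $y(v_k)=\max_{w}\{y(w)+\max\{\Delta_x(v_k,w),1\}\}$; by construction this edge is drawn either as a single quarter arc or as a single vertical segment, hence with complexity~$1$. Since each $v_k$ with $k\ge 3$ contributes exactly one such ``parent'' edge to an earlier vertex, these $n-1$ edges form a spanning tree of $G$, which gives the bound.

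The main obstacle is the area bound, since both stretches enlarge the drawing and one must certify that their combined effect stays within $O(n^4)$. For the width I would argue that S.\ref{s:h-stretch} performs at most $O(n)$ cuts—one per red edge whose arc has non-zero radius, as the globally longest vertical segment is retired at each cut—each shifting by $\delta=O(n)$ since the base drawing has height $O(n)$; as S.\ref{s:h-stretch} leaves $y$-coordinates untouched, the final width is $O(n^2)$ and the height stays $O(n)$. For the height I would bound the maximum $y$-coordinate produced by S.\ref{s:v-stretch}: unrolling the recurrence, $y(v_k)$ equals the sum of the increments $\max\{\Delta_x,1\}$ along the spanning-tree path from $v_k$ down to the base, and the delicate point is to show this telescoping sum is $O(n^2)$ rather than $O(n^3)$. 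I expect this to follow from charging consecutive $x$-gaps along such a path to the $x$-monotone staircase structure of the canonical contour after S.\ref{s:h-stretch}, so that the accumulated horizontal extent is at most the total width $O(n^2)$ plus an $O(n)$ term from the ``$\max\{\cdot,1\}$''. Combining, the area is $O(n^2)\cdot O(n^2)=O(n^4)$. Finally, the running time is $O(n^2)$: the base drawing costs $O(n)$, S.\ref{s:h-stretch} performs $O(n)$ cuts each shifting $O(n)$ vertices, and S.\ref{s:v-stretch} scans the vertices once in canonical order, evaluating each adjacency a constant number of times.
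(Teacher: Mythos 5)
Your algorithm, the correctness/bi-monotonicity argument, and the count of complexity-$1$ edges (the edge $(v_1,v_2)$ plus one parent edge $(v_k,w^\ast)$ per vertex $v_k$, $k\geq 3$, forming a spanning tree) all coincide with the paper's proof, and your width bound for Step~S.\ref{s:h-stretch} is sound: there are at most $n-1$ cuts, each shifting by $\delta=O(n)$ since heights are untouched, so the width becomes $O(n^2)$. The genuine gap is your height bound for Step~S.\ref{s:v-stretch}. You claim that $y(v_k)$, i.e., the telescoped sum of $\max\{\Delta_x,1\}$ along the parent path of $v_k$, is $O(n^2)$ because the accumulated $x$-gaps can be ``charged to the $x$-monotone staircase structure of the contour.'' This is exactly the step you flag as delicate, and it is not justified: a parent path is monotone in the canonical order and in the new $y$-coordinates, but \emph{not} in $x$. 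The parent of a vertex may lie far to its left while the grandparent lies far to its right (a vertex can be placed at an extreme end of its neighbor interval when the contour between $w_\ell$ and $w_r$ runs at slope $\pm 1$), so consecutive $x$-gaps along a path can overlap the same horizontal extent repeatedly and do not telescope against the width. No charging scheme is exhibited, so the claim stands unproven.

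This matters quantitatively, not just cosmetically. Without the path-charging claim, the only available bound is the per-vertex one used by the paper, namely that each vertex increases the height by at most the current width, giving height $O(n\cdot W)$. Combined with your $W=O(n^2)$ this yields height $O(n^3)$ and area $O(n^5)$, missing the stated $O(n^4)$. The paper closes the gap on the \emph{width} side instead: it shows that each stretch by $\delta$ in Step~S.\ref{s:h-stretch} creates a rectangle of area roughly $\tfrac{1}{2}\delta^2+\rho\delta$, that these rectangles are pairwise disjoint and live inside a drawing of area $O(n^2)$, hence the typical stretch is $\delta=O(\sqrt{n})$ and the width after Step~S.\ref{s:h-stretch} is only $O(n^{3/2})$; then the crude height bound $n\cdot O(n^{3/2})=O(n^{5/2})$ suffices for area $O(n^{3/2})\cdot O(n^{5/2})=O(n^4)$. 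To repair your proof you must either supply a real proof of your $O(n^2)$ height claim (which appears hard, for the reason above) or replace your trivial width bound with a packing argument of this kind.
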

\begin{proof}[sketch]
For $k=3,\ldots,n$, vertex $v_k$ is incident to an edge drawn with complexity~$1$ in Step~\ref{s:v-stretch}. Since $(v_1,v_2)$ is drawn as a horizontal segment, at least $n-1$ edges have complexity~$1$. Planarity is proven by induction; the main invariant is that all edges on $C_k \setminus \{(v_1,v_2)\}$ have a quarter circular arc and possibly a vertical segment. Time and area requirements are shown in \arxapp{\cite{arxiv}}{App.~\ref{app:kandinsky}}.\qed
\end{proof}

\section{Conclusions}
\label{sec:conclusions}

In this paper, we continued the study on smooth orthogonal and octilinear drawings. Our $\mathcal{NP}$-hardness proofs are a first step towards settling the complexity~of both drawing problems. We conjecture that the former is $\mathcal{NP}$-hard, even in the case where only the planar embedding is specified by the input. For the latter, it is of interest to know if it remains $\mathcal{NP}$-hard even for planar graphs of max-degree~$4$ or if these graphs allow for a decision algorithm. Our drawing algorithms guarantee bi-monotone $2$-drawings with a certain number of complexity-$1$ edges for maximal planar graphs. Improvements on this number or generalizations to triconnected or simply connected planar graphs are of~importance.

\paragraph{Acknowledgements.} The authors would like to thank Patrizio Angelini and Martin Gronemann for useful discussions.

\bibliographystyle{splncs03}
\bibliography{references}
\arxapp{}{\newpage
\appendix
\section*{Appendix}

\section{Preliminary Notions and Definitions}
\label{app:preliminaries}

Unless otherwise specified, we consider simple undirected graphs. Let $G=(V,E)$ be a graph. We denote by $n$ and $m$ the number of vertices and edges of $G$.
We say that $G$ has \emph{max-degree} $\Delta$, if $G$ has no vertex with degree larger than $\Delta$. 

A planar drawing $\Gamma$ of $G$ partitions the plane into connected regions, called \emph{faces}; the unbounded one is called \emph{outerface}. A (\emph{topological}) \emph{planar embedding}~$\mathcal{E}$ of $G$ is an equivalence class of planar drawings that define the same set of faces and outerface. Embedding $\mathcal{E}$ can equivalently be defined by the cyclic orders of the edges incident to each vertex (also called \emph{combinatorial~embedding}). Given a drawing $\Gamma$, a \emph{vertical cut} is a vertical line, which crosses only horizontal edge segments of the drawing and splits it into two parts; a left and right one~\cite{4M}. If one shifts the right part to the right, while keeping the left part in place, the result has no crossings. A \emph{horizontal cut} is defined analogously.

\begin{figure}[b!]
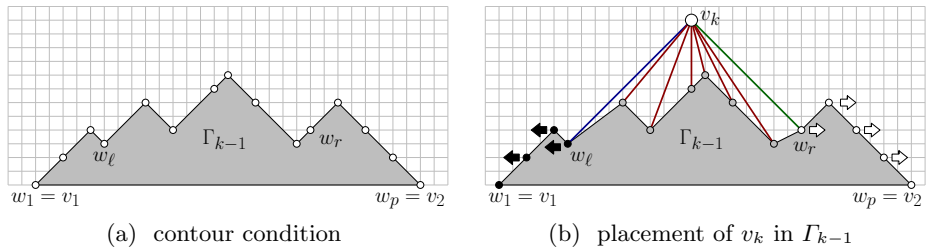

	\centering
	\begin{minipage}[b]{0.48\textwidth}
		\centering
		\subfloat[\label{fig:shift-method-1} {contour condition}]{
		\includegraphics[width=\textwidth,page=1]{canonical}}
	\end{minipage}
	\hfill
	\begin{minipage}[b]{0.48\textwidth}
		\centering
		\subfloat[\label{fig:shift-method-2} {placement of $v_k$ in $\Gamma_{k-1}$}]{
		\includegraphics[width=\textwidth,page=2]{canonical}}
	\end{minipage}
	\caption{%
	Illustration of the shift-method by de~Fraysseix, Pach and Pollack~\cite{dFPP}.}
	\label{fig:shift-method}
\end{figure}

The \emph{canonical order} for maximal planar graphs~\cite{dFPP} is formally defined as follows. Let $G=(V,E)$ be a maximal planar graph and let $\pi =(v_1,\ldots,v_n)$ be a permutation of $V$. Assume that edges $(v_1,v_2)$, $(v_2,v_n)$ and $(v_1,v_n)$ form the outerface of $G$. For $k=1,\ldots,n$, let $G_k$ be the subgraph induced by $\cup_{i=1}^k v_i$ and denote by $C_k$  the outerface of $G_k$. Then, $\pi$ is a \emph{canonical ordering} of $G$ if for each $k=2,\ldots,n$ the following hold: %
\begin{inparaenum}[(i)]
\item $G_k$ is biconnected,
\item all neighbors of $P_k$ in $G_{k-1}$ are on $C_{k-1}$, and
\item all vertices of $P_k$ with $2\leq k < n$ have at least one neighbor in $P_j$ for some $j > k$.
\end{inparaenum}
A canonical ordering of a maximal planar graph can be computed in linear time~\cite{cannonicalorder}.

The \emph{shift-method}~\cite{dFPP} is a well-known linear-time algorithm, which constructs a planar drawing $\Gamma$ of a maximal planar graph $G=(V,E)$ on a grid of quadratic area, based on a canonical order $\pi$ of $G$ as follows. It places $v_1$, $v_2$ and $v_3$ at points $(0,0)$, $(2,0)$ and $(1,1)$. For $k=4,\ldots,n$, assume that a planar drawing $\Gamma_{k-1}$ of $G_{k-1}$ has been constructed in which each edge of $C_{k-1}$ is drawn as a straight-line segment with slope $\pm 1$, except for the edge $(v_1,v_2)$, which is drawn as a horizontal line segment (\emph{contour condition}; see Fig.~\ref{fig:shift-method-1}) and that each of the vertices $v_1,\ldots,v_{k-1}$ has been associated with a so-called \emph{shift-set}, which for $v_1$, $v_2$ and $v_3$ are singletons containing only themselves. Let $w_1,\ldots,w_p$ be the vertices of $C_{k-1}$ from left to right in $\Gamma_{k-1}$, where $w_1=v_1$ and $w_p=v_2$. Let also $(w_\ell,\ldots,w_r)$, with $1 \leq \ell < r \leq p$ be the neighbors of $v_k$ from left to right along $C_{k-1}$ in $\Gamma_{k-1}$. To avoid edge-overlaps, the algorithm first translates each vertex in $\cup_{i=1}^{\ell} S(w_i)$ one unit to the left and each vertex in $\cup_{i=r}^{p} S(w_i)$ one unit to the right, where $S(v)$ denotes the shift-set of $v \in V$. Then, the algorithm places $v_k$ at the intersection of the line with slope $+1$ through $w_\ell$ with the line with slope $-1$ through $w_r$ and sets the shift-set of $v_k$ to $\{v_k\} \cup_{i=\ell+1}^{r-1}S(w_i)$; see Fig.~\ref{fig:shift-method-2}.

\section{Omitted Proofs from Section~\ref{sec:relationships}}
\label{app:relationships}

\rephrase{Theorem}{\ref{theo:union}}{\union}
\begin{proof}
For each $k \in \mathbb{N}_+$ we describe a $4$-regular planar graph $G_k=(V_k,E_k)$ with $20k$ vertices that admits both a bendless smooth orthogonal drawing and a bendless octilinear drawing; refer to Fig.~\ref{fig:trains} for the case $k=2$. $G_k$ has $4k$ subgraphs $W_{i,j}$ such that $1 \leq i \leq 2k$ and $j \in \{t,b\}$ (top and bottom). Graph $W_{i,j}$ consists of five vertices $c_{i,j}$, $n_{i,j}$, $w_{i,j}$, $e_{i,j}$, and $s_{i,j}$ (center, north, west, east, south, respectively), such that $W_{i,j}$ is a wheel on five vertices, i.e., $W_{i,j}$ consists of a center-vertex $c_{i,j}$ and a cycle $C_{i,j}=(n_{i,j},w_{i,j},s_{i,j},e_{i,j})$, such that $c_{i,j}$ is connected to all vertices of cycle $C_{i,j}$.

All vertices except for $c_{i,j}$ already have degree three in $W_{i,j}$. So, we just have to describe the missing edges that make $G_k$ $4$-regular: For $1 \leq h \leq 2k-1$, $(e_{h,j},w_{h+1,j}) \in E_k$ for $j \in \{t,b\}$; blue edges in Fig.~\ref{fig:trains}. Also, $(w_{1,t},w_{1,b}) \in E_k$ and $(e_{2k,t},e_{2k,b}) \in E_k$; gray edges in Fig.~\ref{fig:trains}. For $1 \leq h \leq 2k$, $(s_{h,t},n_{h,b}) \in E_k$; red edges in Fig.~\ref{fig:trains}. Finally, for $1 \leq h \leq k$, $(n_{2h-1,t},n_{2h,t}) \in E_k$ and $(s_{2h-1,b},s_{2h,b}) \in E_k$; green edges in Fig.~\ref{fig:trains}. With those additional edges, $G_k$ becomes $4$-regular. Fig.~\ref{fig:trains} is a certificate that $G_k=(V_k,E_k)$ indeed admits both a bendless smooth orthogonal drawing and a bendless octilinear drawing.\qed
\end{proof}

\begin{lemma}
The graph $B$ of Fig.~\ref{fig:necklaceGem} does not admit a bendless smooth orthogonal drawing, when the outerface is fixed to $(p_1,q_1,p_2,q_2)$, and each of $q_1$ and $q_2$ have two unoccupied ports on the outerface.
\label{lem:necklaceGemNotOcti}
\end{lemma}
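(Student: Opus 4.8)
The plan is to exploit the rigidity of bendless smooth orthogonal drawings and reduce the claim to a finite case analysis that I then drive to a contradiction on every branch. First I would record the complete menu of admissible edge shapes: in a bendless drawing each edge is either an axis-parallel straight segment, whose endpoints share an $x$- or $y$-coordinate and occupy two opposite ports, or a quarter-, half-, or three-quarter-circular arc. For a quarter or three-quarter arc the two endpoints lie on a line of slope $+1$ or $-1$ and use one horizontal port (E or W) together with one vertical port (N or S); for a half arc they share a coordinate and use two parallel ports. At each vertex the four ports are used by at most one edge each, and the cyclic order of the used ports must agree with the fixed combinatorial embedding. Because the hypothesis fixes the outerface to the $4$-cycle $(p_1,q_1,p_2,q_2)$ and reserves two outward ports at each of $q_1$ and $q_2$, the ports available to the interior edges at $p_1$ and $p_2$ are determined up to the residual symmetries of the square outer face.

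Next I would seed the analysis at the rigid core of $B$. Since $\{p_1,p_2\}$ is a separation pair whose removal isolates the degree-$2$ vertices $q_1,q_2$, the outer $4$-cycle together with the interior ports of $p_1$ and $p_2$ forms a small skeleton whose bendless realizations can be enumerated completely, up to the dihedral symmetry of the outer face. For each admissible assignment of ports and shapes to this skeleton I would fix coordinates, leaving only the arc radii as continuous parameters, and then process the interior of $B$ in the order given by a planar peeling of the (unique) fixed embedding, using the second separation pair $\{t_1,t_2\}$ to split the interior into two pieces that I can constrain almost independently.

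The heart of the argument is the incremental augmentation announced in the proof sketch. At each step I add the next interior vertex $v$ and enumerate the ways its already-drawn neighbors can be joined to $v$ by a single segment or arc compatible with their occupied ports and with the slope constraints above; each such choice either pins $v$ to a point or leaves a one-parameter family that the following neighbor must meet. Propagating these constraints, I would show that on every branch one of three obstructions appears: some port is demanded twice at a vertex, two edges are forced to cross, or an edge would have to connect endpoints lying on a line of forbidden slope and hence cannot be realized by any single smooth orthogonal segment. Exhausting all surviving branches then proves that no bendless smooth orthogonal drawing of $B$ exists with this outerface.

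The main obstacle I expect is the interaction between the branching and the continuous radius parameters: unlike the purely orthogonal (HV) setting, an arc couples the horizontal and vertical displacements of its endpoints, so eliminating a branch requires tracking inequalities among the radii rather than merely a discrete port pattern. The value of the fixed embedding and of the two separation pairs is exactly that they keep the number of surviving branches small and localize this radius bookkeeping to one interior piece at a time; the delicate step will be to certify that no choice of radii can simultaneously satisfy the slope constraints of all boundary arcs while keeping the interior piece planar.
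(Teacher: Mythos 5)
Your proposal has the right general toolkit---the menu of admissible edge shapes and slopes is correct, and your three obstructions (port demanded twice, forced crossing, endpoints on a forbidden slope) match exactly the validity criteria the paper uses---but it inverts the rigidity structure of $B$, and this is a genuine gap. You declare the outer $4$-cycle $(p_1,q_1,p_2,q_2)$ together with the interior ports of $p_1,p_2$ to be a ``rigid core'' whose bendless realizations ``can be enumerated completely.'' This is false: the outer cycle is the \emph{least} constrained part of $B$. Its two quadrilateral faces and the two degree-$2$ vertices $q_1,q_2$ leave the relative positions of $p_1$ and $p_2$ (and the shapes of the four outer edges) with continuous positional freedom, so there is no finite list of metric realizations to seed from. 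Worse, your outside-in processing order destroys the finiteness of the subsequent branching: once $p_1,p_2,t_1,t_2$ are placed, each rim vertex $w_i$ of the interior wheel has only \emph{one} already-placed neighbor, so it is confined only to a ray (a one-parameter family), and four such families plus the center $c$ accumulate before anything gets pinned. What you describe then is not a finite case analysis but a multi-parameter feasibility problem, and your plan gives no mechanism for certifying infeasibility over all these coupled parameters.

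The paper's proof runs in exactly the opposite direction, and the key lemma you are missing is what makes that direction work: in the bendless smooth orthogonal model, a triangular face none of whose incident unoccupied ports point into it admits only \emph{two} drawings (a result from~\cite{smog2}; see Fig.~\ref{fig:sc1drawings}). Since the wheel $W_5$ inside $B$ consists of four such triangles sharing the center $c$, it admits only two bendless drawings with all free ports on its outerface---this is the genuinely rigid core. The paper then grows the drawing outward: adding $t_1,t_2$ gives five drawings (Fig.~\ref{fig:bluePart}), and each of $p_1,p_2$ and then $q_1,q_2$ has \emph{two} already-placed neighbors, so its candidate positions are intersections of finitely many rays of slopes $\{0,\pm1,\infty\}$ (Fig.~\ref{fig:greenPartMethod1}), keeping every branch finite (up to a controlled sliding of $t_1,t_2$ along diagonals in two cases). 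The exhaustion then shows $q_1,q_2$ can never be attached with two free ports on the outerface (Fig.~\ref{fig:greenPart}). If you want to salvage your plan, you must reverse your peeling order so that it starts at the wheel, and you must import (or reprove) the triangular-face rigidity lemma; without it, the finite enumeration never gets off the ground.
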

\begin{proof}
First, we discuss some structural properties of graph $B$. Observe~that graph $B$ contains a wheel $W_5$ on five vertices as a subgraph, which is induced by the vertices drawn as circles in Fig.~\ref{fig:necklaceGem}. Its center is vertex $c$ and its rim consists of vertices $w_1$, $w_2$, $w_3$, and $w_4$. Vertices $w_1$ and $w_2$ form a triangular face with vertex $t_1$; vertices $w_3$ and $w_4$ form a triangular face with $t_2$ (vertices $t_1$ and $t_2$ are drawn as triangles in Fig.~\ref{fig:necklaceGem}). Observe that $t_1$ and $t_2$ form a separation pair and both are connected to vertices $p_1$ and $p_2$ (drawn as pentagons in Fig.~\ref{fig:necklaceGem}) forming two pentagonal faces $(p_1,t_1,w_1,w_4,t_2)$ and $(p_2, t_2, w_3, w_2, t_1)$. Observe that $p_1$ and $p_2$ also form a separation pair and are both connected to vertices $q_1$ and $q_2$ (drawn as squares in Fig.~\ref{fig:necklaceGem}) forming two quadrilateral faces $(q_1, p_2, t_1, p_1)$ and $(q_2,p_1, t_2, p_2)$. Hence, $B$ has two separation pairs and two vertices of degree~$2$ (i.e., $q_1$ and $q_2$). The remaining vertices of $B$ are of degree exactly~$4$.

In order to show that $B$ does not admit a bendless smooth orthogonal drawing, when the outerface is $(p_1,q_1,p_2,q_2)$, and each of $q_1$ and $q_2$ have two unoccupied ports on the outerface, we first observe the following: If we want to draw $W_5$ such that all of its unoccupied ports are on its outerface, then none of its four triangular faces must have an unoccupied port pointing in its interior.  In the bendless smooth orthogonal model, there are only two possible drawings for a triangular face fulfilling this property, as shown in~\cite{smog2}; see Figs.~\ref{fig:sc1triangle1} and~\ref{fig:sc1triangle2}. This implies that $W_5$ admits only two bendless smooth orthogonal drawings such that all of its unoccupied ports are on its outerface; see Figs.~\ref{fig:sc1wheel1} and~\ref{fig:sc1wheel2}.

\begin{figure}[t]
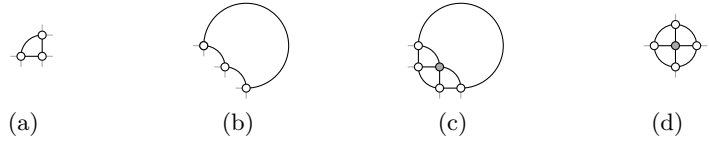

	\centering
	\begin{minipage}[b]{0.225\textwidth}
		\centering
		\subfloat[\label{fig:sc1triangle1} {}]{
		\includegraphics[scale=0.5,page=10]{relationships}}
	\end{minipage}
	\begin{minipage}[b]{0.225\textwidth}
		\centering
		\subfloat[\label{fig:sc1triangle2} {}]{
		\includegraphics[scale=0.5,page=11]{relationships}}
	\end{minipage}
	\begin{minipage}[b]{0.225\textwidth}
		\centering
		\subfloat[\label{fig:sc1wheel1} {}]{
		\includegraphics[scale=0.5,page=12]{relationships}}
	\end{minipage}
	\begin{minipage}[b]{0.225\textwidth}
		\centering
		\subfloat[\label{fig:sc1wheel2} {}]{
		\includegraphics[scale=0.5,page=13]{relationships}}
	\end{minipage}
	\caption{%
	All possible drawings of (a)-(b)~a triangular face, and of (c)-(d)~a wheel on five vertices, 
	such that all unoccupied ports are on the outerface.}
	\label{fig:sc1drawings} 
\end{figure}

Next, we consider $t_1$ and $t_2$.  Since $t_1$ and $t_2$ define two triangular faces in the subgraph induced by $W_5$, and $t_1$ and $t_2$, similar as above, we can conclude that there are five different drawings of this graph; see  Fig.~\ref{fig:bluePart}. Note that $t_1$ and $t_2$ can be moved along diagonals of slope $1$ in
Figs.~\ref{fig:bluePart4} and~\ref{fig:bluePart5}.

\begin{figure}[bp]
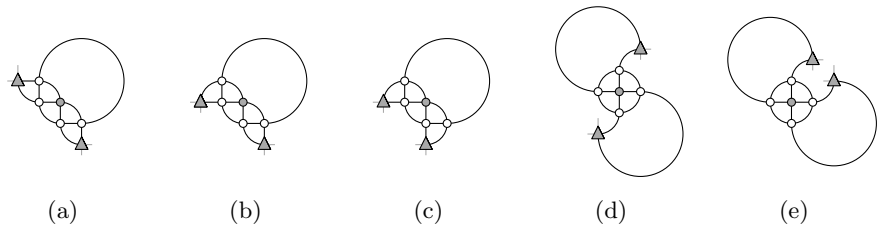

	\centering
	\subfloat[\label{fig:bluePart1} {}]{
	\includegraphics[scale=0.5,page=14]{relationships}}
	\hfil
	\subfloat[\label{fig:bluePart2} {}]{
	\includegraphics[scale=0.5,page=15]{relationships}}
	\hfil
	\subfloat[\label{fig:bluePart3} {}]{
	\includegraphics[scale=0.5,page=16]{relationships}}
	\hfil
	\subfloat[\label{fig:bluePart4} {}]{
	\includegraphics[scale=0.5,page=17]{relationships}}
	\hfil
	\subfloat[\label{fig:bluePart5} {}]{
	\includegraphics[scale=0.5,page=18]{relationships}}
	\caption{All possible drawings of the subgraph induced by $W_5$, $t_1$ and $t_2$, such that all unoccupied ports are on the outerface.}
	\label{fig:bluePart}
\end{figure}

In the following step, we will consider all candidate positions for placing $p_1$ and $p_2$, which we can identify as follows: In a bendless smooth orthogonal drawing, both endpoints of an edge are located along a horizontal, vertical or diagonal line. Both $p_1$ and $p_2$ are neighbors of both $t_1$ and $t_2$, for which we already defined their locations. If we consider all rays emanating from $t_1$ and $t_2$ with slopes $\{0,1,-1,\infty\}$, then $p_1$ and $p_2$ must be located at an intersection of a ray emanating from $t_1$ and a ray emanating from $t_2$; see Fig.~\ref{fig:greenPartMethod1}. For each candidate position, we then try to draw the edges to $t_1$ and $t_2$ using one of the edge segments supported by the smooth orthogonal model. The resulting drawing is \emph{valid} if and only if none of the following arises: %
\begin{inparaenum}[(i)] 
\item a vertex has an unoccupied port not incident to the outerface; see  Fig.~\ref{fig:greenPartMethod2},
\item a port is used twice; see Fig.~\ref{fig:greenPartMethod3},
\item an edge is not drawn planar; see Fig.~\ref{fig:greenPartMethod4}.
\end{inparaenum} 
Recall that for the cases shown in Figs.~\ref{fig:bluePart4} and~\ref{fig:bluePart5}, we have to take into account all relevant combinations of positions of $t_1$ and $t_2$ along the diagonals.

\begin{figure}[t]
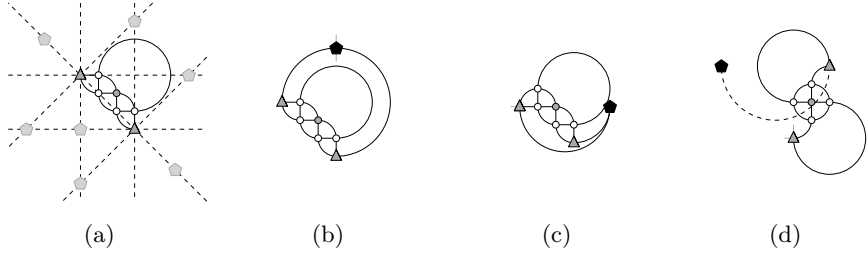

	\centering
	\subfloat[\label{fig:greenPartMethod1} {}]{
	\includegraphics[scale=0.425,page=19]{relationships}}
	\hfil
	\subfloat[\label{fig:greenPartMethod2} {}]{
	\includegraphics[scale=0.425,page=20]{relationships}}
	\hfil
	\subfloat[\label{fig:greenPartMethod3} {}]{
	\includegraphics[scale=0.425,page=21]{relationships}}
	\hfil
	\subfloat[\label{fig:greenPartMethod4} {}]{
	\includegraphics[scale=0.425,page=22]{relationships}}
	\caption{Method used for identifying valid drawings for $p_1$ and $p_2$:
	(a)~Identification of candidate positions. 
	(b)-(d)~Cases of invalid drawings.}
	\label{fig:greenPartMethod}
\end{figure}

As a result of our analysis, we can conclude that the only valid drawings of the subgraph induced by $W_5$, $t_1$, $t_2$ and at least one of $p_1$ and $p_2$ are those shown in Fig.~\ref{fig:greenPart}. Note that in the cases shown in Figs.~\ref{fig:greenPart2}-\ref{fig:greenPart7}, we can only place one of $p_1$ and $p_2$. For the case shown in Fig.~\ref{fig:greenPart1} we proceed by considering all candidate positions of $q_1$ and $q_2$, as we did for $p_1$ and $p_2$. As a result, we conclude that $q_1$ and $q_2$ cannot be added such that each of them has two unoccupied ports on the outerface, which completes the proof of this lemma.\qed
\end{proof}

\begin{figure}[!tp]
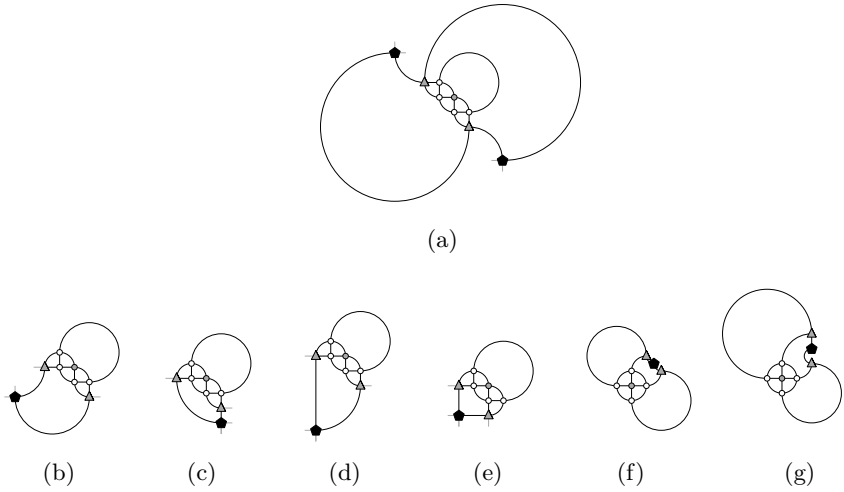

	\centering
	\begin{minipage}[b]{\textwidth}
		\centering
		\subfloat[\label{fig:greenPart1} {}]{
		\includegraphics[scale=0.35,page=26]{relationships}}
	\end{minipage}
	\subfloat[\label{fig:greenPart2} {}]{
	\includegraphics[scale=0.35,page=27]{relationships}}
	\hfil
	\subfloat[\label{fig:greenPart3} {}]{
	\includegraphics[scale=0.35,page=28]{relationships}}
	\hfil
	\subfloat[\label{fig:greenPart4} {}]{
	\includegraphics[scale=0.35,page=29]{relationships}}
	\hfil
	\subfloat[\label{fig:greenPart5} {}]{
	\includegraphics[scale=0.35,page=30]{relationships}}
	\hfil
	\subfloat[\label{fig:greenPart6} {}]{
	\includegraphics[scale=0.35,page=31]{relationships}}
	\hfil
	\subfloat[\label{fig:greenPart7} {}]{
	\includegraphics[scale=0.35,page=32]{relationships}}
	\caption{
	All valid drawings of the subgraph induced by $W_5$, $t_1$, $t_2$, and at least one of $p_1$ and $p_2$.}
	\label{fig:greenPart}
\end{figure}

\FloatBarrier
\pagebreak

\section{Omitted Proofs and Material from Section~\ref{sec:nphardness}}
\label{app:np-hardness}

\begin{figure}[!h]
	\centering
	\includegraphics[scale=0.3,page=5]{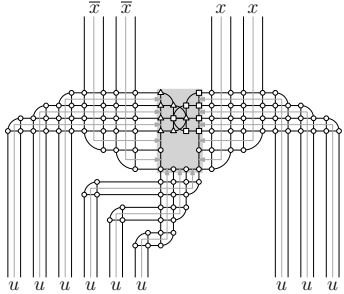}
	\caption{Illustration of the parity gadget when $\ell(x) \approx \ell(\overline{x})$.}
	\label{fig:parityGadgetInfeasible}
\end{figure}

\begin{lemma}\label{lem:partity-smooth}
For the smooth orthogonal model, the parity gadget of variable~$x$ of formula $\varphi$ contains no edge crossings if and only if $|\ell(\overline{x}) - \ell(x)|> \frac{\sqrt{3}}{2}\ell(u)$, where $\ell(u)$ denotes the unit length.
\end{lemma}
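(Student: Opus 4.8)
The plan is to introduce explicit coordinates for the two blocks of the parity gadget and reduce the question of whether they overlap to a single distance computation between the two quarter-circle arcs that face each other across the \emph{vertical gap} of width $3\ell(u)$ (see Fig.~\ref{fig:parityGadgetInfeasible}). Recall from the variable gadget that $\ell(x)+\ell(\overline{x})=3\ell(u)$, so the two output lengths are coupled: increasing $\ell(x)$ decreases $\ell(\overline{x})$ by the same amount. I would first argue that the \emph{horizontal} geometry of the gadget is rigid: the gap width $3\ell(u)$ together with the three ``length-$3$'' faces per block (each a quarter disk whose two straight radii equal the arc radius) fixes the $x$-coordinates of the arc tips of both blocks, independently of the encoded truth value. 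Only the \emph{vertical} placement of the blocks varies, and it varies linearly with the chosen lengths, so that the signed vertical offset between the two facing arcs equals $\ell(\overline{x})-\ell(x)$; this is precisely the regime in which ``one block may be located above the other.''

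Next I would identify the pair of features that first come into contact as the offset shrinks. Since both blocks present a convex quarter-circle arc toward the interior of the gap, the closest approach is realized between these two facing arcs. I would therefore model each tip as (part of) a circle of radius $r$ determined by the gadget, place the two centers at horizontal separation $h$ and vertical separation $\Delta=|\ell(\overline{x})-\ell(x)|$, and use the elementary fact that, in this facing configuration, the two arcs are disjoint exactly when the distance between centers is at least $2r$, i.e.
\[
\sqrt{h^{2}+\Delta^{2}}\;\ge\;2r
\qquad\Longleftrightarrow\qquad
\Delta\;\ge\;\sqrt{4r^{2}-h^{2}}.
\]
Substituting the fixed dimensions of the gadget (the arc radius and the horizontal offset read off from the $3\ell(u)$-wide gap) yields $\sqrt{4r^{2}-h^{2}}=\tfrac{\sqrt{3}}{2}\,\ell(u)$, so the arcs, and hence the two blocks, are crossing-free if and only if $\Delta=|\ell(\overline{x})-\ell(x)|>\tfrac{\sqrt{3}}{2}\,\ell(u)$. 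The strict inequality corresponds to excluding the tangent configuration, in which the two arcs touch in a single point and the drawing is therefore no longer planar.

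The hard part, and the step I would treat most carefully, is the reduction in the second paragraph: I must show that overlap of the \emph{entire} gadget is equivalent to intersection of this single pair of arcs, rather than merely reading this off the figure. Concretely, I would verify two containment claims. First, that no other pair of edges of the two blocks can cross while the facing arcs are still disjoint — this should follow because the square-shaped faces and the straight radii of the length-$3$ faces lie ``behind'' the convex arcs, so any penetration of one block into the other must first pass through the region guarded by the arcs. Second, that once the two arcs are disjoint the whole configuration is planar, which I would obtain from the monotonicity of the remaining edges of the gadget (they run away from the gap) and the surrounding construction. Pinning down the exact constants $r$ and $h$ from the face dimensions, and establishing these two containment claims rigorously, is where the real work lies; the identity $\sqrt{4r^{2}-h^{2}}=\tfrac{\sqrt{3}}{2}\,\ell(u)$ itself is then a one-line computation.
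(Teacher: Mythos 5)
Your proof skeleton does coincide with the paper's: there, too, the critical pair of features is the two facing quarter-circle arcs (both of radius $\ell(u)$), the drawing is crossing-free exactly when their centers are more than $2\ell(u)$ apart, and that center distance is the hypotenuse of a right triangle evaluated by Pythagoras' theorem. The paper likewise reads the reduction to this single pair of arcs off the figure, so your two containment claims are, if anything, more careful than the original.

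The gap is in the quantitative step, which is the entire content of the lemma. You stipulate that the vertical separation of the two arc centers equals $\Delta=|\ell(\overline{x})-\ell(x)|$, and you assert, without computing it, that the gadget's dimensions satisfy $\sqrt{4r^2-h^2}=\frac{\sqrt{3}}{2}\ell(u)$. Both claims fail. Each block sits at a height governed by \emph{two} stacked edges of length $\ell(x)$ (respectively $\ell(\overline{x})$), so the block positions scale with $2\ell(x)$ and $2\ell(\overline{x})$; this is explicit in the companion octilinear proof (Lemma~\ref{lem:partity-octilinear}), where the relevant $y$-coordinates are $2\ell(x)$ and $2\ell(\overline{x})+\frac{5}{3}\ell(u)$. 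Hence the vertical separation of the centers is $2|\ell(\overline{x})-\ell(x)|$, not $|\ell(\overline{x})-\ell(x)|$. With the radius $r=\ell(u)$ stated in the paper's proof, the horizontal center offset consistent with the gadget is $h=\ell(u)$, and Pythagoras gives $\sqrt{\ell(u)^2+4\lambda^2}>2\ell(u)$ if and only if $\lambda>\frac{\sqrt{3}}{2}\ell(u)$, which is the lemma. Under your parametrization, those same dimensions yield the threshold $\sqrt{4\ell(u)^2-\ell(u)^2}=\sqrt{3}\,\ell(u)$ --- twice the claimed bound --- while recovering $\frac{\sqrt{3}}{2}\ell(u)$ would force $h=\frac{\sqrt{13}}{2}\ell(u)\approx 1.8\,\ell(u)$, a length that appears nowhere in the gadget. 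The factor of two you dropped in the vertical offset thus propagates directly into the final constant; the ``one-line computation'' you defer is exactly the point at which the proof, as written, would fail.
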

\begin{proof}
Refer to Fig.~\ref{fig:SC1ParityGadgetDetail}, which gives a more detailed
illustration of the vertical gap of the parity gadget. Consider the case where
$x = \texttt{false}$. The case where $x = \texttt{true}$ is symmetric. If $x =
\texttt{false}$, we want the bottom arc cut by the dashed diagonal to be
completely below the top arc cut by the dashed diagonal. Since we know that both
arcs have radius $\ell(u)$, their centers (gray-colored in
Fig.~\ref{fig:SC1ParityGadgetDetail}) should be a distance of $> 2 \cdot
\ell(u)$ apart from each other. This corresponds to the length of the dashed
diagonal. We can compute the length of the dashed diagonal in dependence of
$\lambda = \ell(\overline{x}) - \ell(x)$ as the dashed diagonal is part of a right triangle for which we know the remaining side lengths. Thus, we can use Pythagoras' theorem and compute the length of the diagonal which gives us $\lambda > \sqrt{3}/{2}\cdot \ell(u) \approx 0.866 \cdot \ell(u)$.\qed
\end{proof}

\begin{figure}[!h]
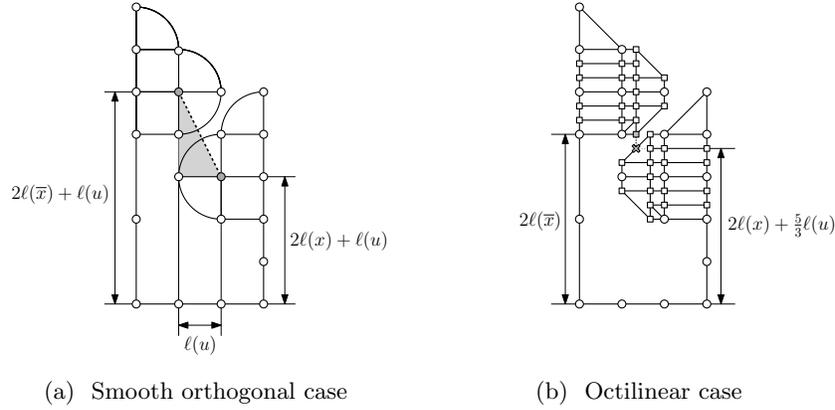

	\centering
	\subfloat[\label{fig:SC1ParityGadgetDetail} {Smooth orthogonal case}]{
	\includegraphics[scale=0.5,page=11]{smogNPFigures}}
	\hfil
	\subfloat[\label{fig:8C1ParityGadgetDetail} {Octilinear case}]{
	\includegraphics[scale=0.5,page=12]{smogNPFigures}}
	\caption{Illustrations for the proofs of Lemmas~\ref{lem:partity-smooth} and~\ref{lem:partity-octilinear}.}
	\label{fig:parityGadgetDetail}
\end{figure}

\noindent The corresponding parity gadget for the octilinear model is
illustrated in Fig.~\ref{fig:8C1ParityGadgetDetail}.

\begin{lemma}\label{lem:partity-octilinear}
For the octilinear model, the parity gadget of variable~$x$ of formula $\varphi$
contains no edge crossings if and only if $|\ell(\overline{x}) - \ell(x)|> \frac{5}{6}\ell(u)$, where $\ell(u)$ denotes the unit length.
\end{lemma}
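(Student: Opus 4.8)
The plan is to reprove Lemma~\ref{lem:partity-smooth} essentially verbatim for the octilinear model, the only change being that every quarter-circle arc of radius $\ell(u)$ bounding a block of the vertical gap is now a $45^\circ$ diagonal segment (cf.\ Fig.~\ref{fig:8C1ParityGadgetDetail} versus Fig.~\ref{fig:SC1ParityGadgetDetail}). First I would fix $x = \texttt{false}$ and handle $x = \texttt{true}$ by symmetry, exactly as before. Overlap-freeness of the gadget again reduces to a single local condition: the tip of the upper block must clear the tip of the lower block inside the gap of width $3\ell(u)$, where the vertical displacement of the two blocks is controlled by $\lambda = \ell(\overline{x}) - \ell(x)$.

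Next I would make the clearance condition precise. Because the intruding tips are now bounded by diagonal segments of slope $\pm 1$ rather than by circular arcs, the critical contact is no longer the tangency of two disks of radius $\ell(u)$ (which gave the center-to-center threshold $2\ell(u)$ in Lemma~\ref{lem:partity-smooth}); instead it is the first moment at which the lower boundary segment of the upper block meets the upper boundary segment of the lower block. I would locate this extremal configuration in Fig.~\ref{fig:8C1ParityGadgetDetail}, read off the right triangle whose hypotenuse measures the separation of the two tips and whose legs are expressible through $\lambda$, $\ell(u)$ and the fixed gap width $3\ell(u)$, and then impose that the two segments be disjoint. Solving the resulting (again elementary) inequality yields the threshold $|\ell(\overline{x}) - \ell(x)| > \tfrac{5}{6}\ell(u)$, and both directions of the equivalence follow at once since the condition is tight.

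The main obstacle is pinning down the correct extremal contact. A diagonal segment of horizontal span $\ell(u)$ reaches less deeply into the gap than a quarter-circle arc of radius $\ell(u)$, so the two tips can be packed slightly closer before colliding; this is precisely why the octilinear threshold $\tfrac{5}{6}\ell(u) \approx 0.833\,\ell(u)$ comes out below the smooth-orthogonal threshold $\tfrac{\sqrt{3}}{2}\ell(u) \approx 0.866\,\ell(u)$. The care lies in verifying that it is indeed the two slanted tip segments (and not some endpoint touching a straight side, nor any other pair of edges of the gadget) that constrains the packing first; once the right pair of features is identified, the computation is routine.
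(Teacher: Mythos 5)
There is a genuine gap, and it lies in your very first assumption. You take the octilinear parity gadget to be the smooth orthogonal one with every quarter-circle arc replaced by its diagonal chord. The paper does exactly the opposite: the proof sketch of Theorem~\ref{thm:np-octilinear} states that \emph{all gadgets except the parity gadget} can be converted by replacing arcs with diagonal segments, and the parity gadget is instead redesigned. Concretely, in the octilinear version several axis-aligned edges of unit length $\ell(u)$ are subdivided into three edges of length $\frac{1}{3}\ell(u)$ each (realizable by composing each unit length of three edges in the representation $\mathcal{R}$). The threshold $\frac{5}{6}\ell(u)$ is a property of \emph{this} modified gadget, not of the naive-replacement gadget: in the naive gadget the critical features are the two parallel diagonal chords of the former tip arcs, whose supporting lines are at distance $2|\ell(\overline{x})-\ell(x)|-\ell(u)$ in the vertical direction, so the clearance condition comes out as $|\ell(\overline{x})-\ell(x)|>\frac{1}{2}\ell(u)$ --- there is no way to ``solve the elementary inequality'' in your setting and obtain $\frac{5}{6}$. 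Your closing intuition (a chord intrudes less deeply than an arc, hence $\frac{5}{6}<\frac{\sqrt{3}}{2}$) is therefore not the real explanation; the two numbers are close only because the gadget was rebuilt, with the subdivided edges contributing the offset that produces $\frac{5}{6}$.

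A secondary inconsistency: you correctly observe that the critical contact is between two parallel slanted segments, but then propose a right triangle whose \emph{hypotenuse} measures their separation, mimicking the Pythagorean computation of Lemma~\ref{lem:partity-smooth}. For parallel segments no such computation arises, and indeed the paper's proof uses none: the minimum distance lies between the bottom diagonal of the upper block and the \emph{middle} diagonal of the lower block (see Fig.~\ref{fig:8C1ParityGadgetDetail}), it is measured vertically between a vertex of one block and the point of the other block's diagonal directly below it, and it evaluates linearly to $d = 2\left(\ell(\overline{x})-\ell(x)\right)-\frac{5}{3}\ell(u)$, giving $d>0$ if and only if $|\ell(\overline{x})-\ell(x)|>\frac{5}{6}\ell(u)$. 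The fact that the octilinear threshold is rational while the smooth orthogonal one is $\frac{\sqrt{3}}{2}\ell(u)$ is itself a signal that the geometry here is linear (separation of parallel lines), not circular (tangency of two disks of radius $\ell(u)$). To repair your proof you would need to (i) adopt the subdivided gadget, which changes the $y$-coordinates of the two critical features to $2\ell(\overline{x})$ and $2\ell(x)+\frac{5}{3}\ell(u)$, and (ii) replace the hypotenuse argument by this vertical-distance computation.
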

\begin{proof}
Refer to Fig.~\ref{fig:8C1ParityGadgetDetail}. Consider the case where $x = \texttt{false}$. The case where $x = \texttt{true}$ is symmetric. In comparison to the smooth orthogonal setting we subdivide several axis-aligned edges of unit edge length $\ell(u)$ into three equally sized smaller edges (see the edges with endpoints drawn as squares in the figure) of length $\frac{1}{3}\ell(u)$. Note that we can realize this by actually composing each $\ell(u)$ in the representation of three edges of size $\frac{1}{3}\ell(u)$ instead. 

Since we do not deal with circular arc segments here, the smallest distance $d > 0$ between both blocks is easy to compute; in particular it is located between the bottom diagonal of the upper block and the middle diagonal of the lower block which are parallel. For instance, $d$ is the distance between the bottom endpoint of the bottom diagonal of the upper block $v_b$ (gray squared-shaped vertex in Fig.~\ref{fig:8C1ParityGadgetDetail}) and the point on the diagonal of the lower block that is vertically below $v_b$ (gray cross in Fig.~\ref{fig:8C1ParityGadgetDetail}) which is not a vertex). For both, we know the $y$-coordinates $2\cdot \ell(x)$ and $2\cdot \ell(\overline{x})+5/3 \cdot \ell(u)$, respectively. Hence, $d = 2 \cdot \left( \ell(\overline{x}) - \ell(x) \right) - 5/3 \cdot \ell(u) > 0$ which implies $\ell(\overline{x}) - \ell(x) > 5/6 \cdot \ell(u)$.\qed 
\end{proof}

\section{Omitted Proofs and Material from Section~\ref{sec:kandinsky}}
\label{app:kandinsky}

\begin{theorem}\label{theo:octiSC2}
A maximal planar $n$-vertex graph admits a bi-monotone planar octilinear $2$-drawing in the Kandinsky model, which requires $O(n^2)$ area and can be computed in $O(n)$ time. In addition, each bend is at~$135^\circ$.
\end{theorem}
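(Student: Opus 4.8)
The plan is to mirror the construction in the proof of Theorem~\ref{theo:niceSC2}, replacing every quarter circular arc by a straight diagonal segment of slope $\pm 1$. First I would place $v_1$, $v_2$, $v_3$ at $(0,0)$, $(2,0)$, $(1,1)$, drawing $(v_1,v_2)$ horizontally and each of $(v_1,v_3)$, $(v_2,v_3)$ as a diagonal segment (their endpoints already lie on lines of slope $+1$ and $-1$). I would then maintain the invariant that in $\Gamma_{k-1}$ every edge of the outerface $C_{k-1}$ is a single diagonal segment whose endvertices lie on a line of slope $\pm 1$, the only exception being $(v_1,v_2)$, which stays horizontal.

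For the incremental step I would process $v_k$ exactly as in the shift-method: shift the shift-sets of $w_1,\ldots,w_\ell$ one unit left and those of $w_r,\ldots,w_p$ one unit right, and place $v_k$ at the intersection of the slope-$(+1)$ line $L_\ell$ through $w_\ell$ and the slope-$(-1)$ line $L_r$ through $w_r$. Since $v_k$ lies on both $L_\ell$ and $L_r$, the blue edge $(w_\ell,v_k)$ and the green edge $(v_k,w_r)$ are drawn as pure diagonal segments. For each intermediate neighbour $w_i$ with $\ell<i<r$, the red edge $(w_i,v_k)$ is drawn as a vertical segment from $w_i$ up to whichever of $L_\ell$, $L_r$ it reaches first, followed by the portion of that line up to $v_k$. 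The shift may give $(w_\ell,w_{\ell+1})$ and $(w_{r-1},w_r)$ a horizontal segment, but these edges become interior and thus do not affect the outerface invariant.

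Bi-monotonicity would then follow by construction: a blue or green edge is a (possibly zero-length) horizontal segment followed by a diagonal, and a red edge is a vertical segment followed by a diagonal, so in each case both coordinates vary monotonically along the edge. The claim that every bend is at $135^\circ$ is immediate, as the only bends that occur are where a horizontal or vertical segment meets a slope-$(\pm 1)$ diagonal, and such a turn subtends exactly $135^\circ$. Planarity I would establish by the same induction as in Theorem~\ref{theo:niceSC2}: the edges incident to $v_k$ pairwise avoid one another and avoid $\Gamma_{k-1}$, while the shift preserves the shapes and relative positions of all other edges. The drawing then inherits the $O(n^2)$ area and, using the linear-time coordinate computation of~\cite{DBLP:journals/ipl/ChrobakP95}, the $O(n)$ running time of the shift-method.

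The step I expect to require the most care is checking that replacing the arcs by straight diagonals introduces no crossings among the red edges sharing the apex $v_k$, since these now run along $L_\ell$ or $L_r$ rather than curving into $v_k$. Because the $w_i$ are separated vertically before their edges reach $L_\ell$ or $L_r$ and only then fan into $v_k$ from the two sides, the same left-to-right ordering argument as in the smooth case applies, so no genuine new obstacle arises and the adaptation is essentially syntactic.
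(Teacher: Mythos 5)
Your overall strategy (re-running the shift-method with diagonal chords in place of quarter arcs) is in essence the same as the paper's, which simply converts the smooth orthogonal drawing of Theorem~\ref{theo:niceSC2} by redrawing every quarter circular arc as the diagonal segment between its endpoints. However, your proposal has a genuine gap exactly at the step you flag at the end. You draw every red edge $(w_i,v_k)$, $\ell<i<r$, as a vertical segment followed by ``the portion of $L_\ell$ (or $L_r$) up to $v_k$''. If vertices are points, then all red edges attached to $v_k$ from the left share the sub-segment of $L_\ell$ between the highest attachment point and $v_k$, and they moreover overlap the blue edge $(w_\ell,v_k)$, which occupies all of $L_\ell$; symmetrically on the right. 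Your appeal to ``the same left-to-right ordering argument as in the smooth case'' does not rescue this: in the smooth model the red arcs entering $v_k$ are nested arcs of different radii and are internally disjoint even when they share the endpoint $v_k$, but their chords are nested \emph{collinear} segments, so replacing arcs by chords destroys precisely the disjointness that the smooth argument exploits. The adaptation is therefore not ``essentially syntactic'' at this point; it is the one place where arcs and diagonals genuinely differ.

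What is missing is the defining feature of the Kandinsky model, which your proposal never invokes: vertices are squares on a coarse grid, and several edges may attach to the same side of a square at distinct points of a finer grid. In the drawing of Theorem~\ref{theo:niceSC2}, the red arcs entering $v_k$ from the left end at distinct attachment points (at different heights) on the left side of $v_k$'s square; consequently their chords lie on distinct parallel lines of slope $+1$, and after conversion the red edges are pairwise disjoint, disjoint from the blue diagonal, and still bend only at $135^\circ$. This is also why the paper's own (very short) proof only needs to note that planarity follows because blue and green edges do not pass through vertices by construction --- the lens-shaped region swept when an arc is replaced by its chord is empty. Once you make the Kandinsky attachment points explicit, the rest of your argument (bi-monotonicity, bends of $135^\circ$ where a diagonal meets a horizontal or vertical segment, and the $O(n^2)$ area and $O(n)$ time inherited from the shift-method) goes through as you state.
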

\begin{proof}
\begin{figure}[h!]
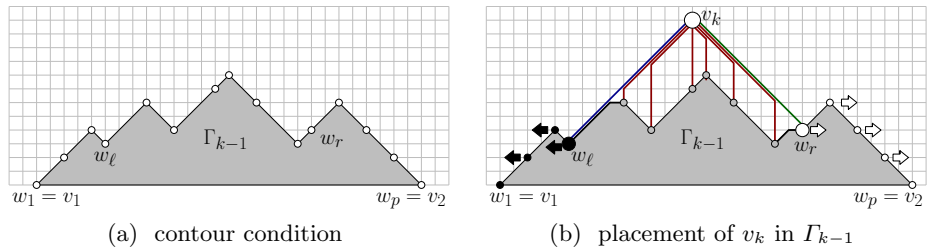

	\centering
	\begin{minipage}[b]{0.48\textwidth}
		\centering
		\subfloat[\label{fig:octi-shift-method-1} {contour condition}]{
		\includegraphics[width=\textwidth,page=1]{canonical}}
	\end{minipage}
	\hfill
	\begin{minipage}[b]{0.48\textwidth}
		\centering
		\subfloat[\label{fig:octi-shift-method-2} {placement of $v_k$ in $\Gamma_{k-1}$}]{
		\includegraphics[width=\textwidth,page=9]{canonical}}
	\end{minipage}
	\caption{%
	Illustration of the modified shift-method for the octilinear Kandinsky model.}
	\label{fig:octi-shift-method}
\end{figure}
The proof is rather simple. We can actually convert the layout computed for the smooth orthogonal model to octilinear by redrawing all quarter circular arcs of it as diagonal segments; see also Fig.~\ref{fig:octi-shift-method-2}. This results in bends at~$135^\circ$. Planarity follows from blue and green edges not passing through vertices by virtue of construction.\qed
\end{proof}

\begin{figure}[h]
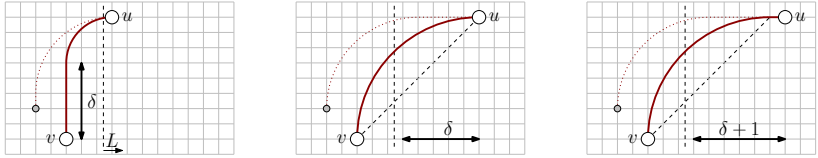

	\centering
	\includegraphics[width=.25\textwidth,page=5]{canonical}
	\hfil
	\includegraphics[width=.25\textwidth,page=6]{canonical}
	\hfil
	\includegraphics[width=.25\textwidth,page=7]{canonical}
	\caption{%
	Stretching an edge containing a vertical segment of length $\delta$.}
	\label{fig:h-stretch}
\end{figure}

\rephrase{Theorem}{\ref{theo:evenNicerSC2}}{\evenNicerSCTwo}
\begin{proof}
The time complexity follows from the fact that in Steps~\ref{s:h-stretch} and~\ref{s:v-stretch} of our algorithm we may have to stretch the drawing a linear number of times.

\begin{figure}[h!]
	\centering
	\includegraphics[scale=0.4,page=8]{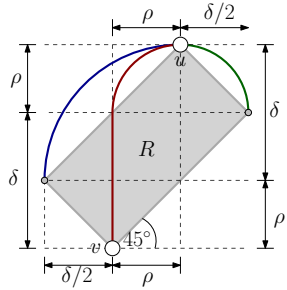}
	\caption{Proof of the area requirement.}
	\label{fig:areaRequirement}
\end{figure}

For the area requirement, we first consider Step~\ref{s:h-stretch}. Recall that in this step we stretch each red edge that is not a vertical line segment. Note that the red edges form a tree, which implies that we stretch at most $n-1$ edges. If at some point we stretch due to a red edge $(u,v)$ such that vertex $u$ is to the right of $v$ and above $v$, our algorithm guarantees that we will not stretch again due to another edge $(u,w)$ with $u$ to the right of $w$ and above $w$. This implies that the blue and the green edges incident to $u$ define a rectangle $R$ that will not contain another edge that must be stretched later in the algorithm (refer to the gray colored rectangle in Fig.~\ref{fig:areaRequirement}). If $(u,v)$ is composed of a vertical segment of length $\delta$ and a quarter circle arc of radius~$\rho$, then the entire drawing is stretched by $\delta$. This fixes the length of all segments necessary to compute the area of rectangle $R$; see Fig.~\ref{fig:areaRequirement}. In total, the area of $R$ is $\mathcal{A}(R)=\frac{1}{2}\delta^2 + \rho\delta$. Obviously, $\mathcal{A}(R)$ is minimized when $\rho=1$, for a constant $\delta$, which reduces the formula to $\mathcal{A}(R)=\frac{1}{2}\delta^2 + \delta$, which grows quadratically in $\delta$. Since the area of the drawing computed before the application of Step~\ref{s:h-stretch} is $O(n^2)$, the rectangles of red edges are on average of $O(n)$ area. Note that by planarity no two such rectangles overlap. Since the area of each rectangle grows quadratically in the length of the vertical segment of its associated red edge, the drawing requires the most total stretching, if each red edge is stretched by the same amount. In this case, $1/2\delta^2 + \delta = O(n)$, which implies that $\delta = O(\sqrt{n})$. Hence, the drawing obtained by Step~\ref{s:h-stretch} of our algorithm has width $O(n^{3/2})$.

For each vertex in Step~\ref{s:v-stretch} of our algorithm, we may increase the height of the drawing obtained by Step~\ref{s:h-stretch} by an amount that is bounded by the width of the initial drawing, i.e., $O(n^{3/2})$. This leads to a total height of $O(n^{5/2})$ and therefore a total area bound of $O(n^4)$.\qed
\end{proof}

\section{Example Run of our Drawing Algorithm}
\label{app:example}

In this section, we describe an example run of our drawing algorithm from Section~\ref{sec:kandinsky} on the octahedron graph. Figs.~\ref{fig:exampleRun-1} and~\ref{fig:exampleRun-2} illustrate Steps~\ref{s:h-stretch} and~\ref{s:v-stretch}, respectively. In particular, Fig.~\ref{fig:exampleRun1} shows the output of our modification of the shift-method and the first vertical cut. Fig.~\ref{fig:exampleRun2} shows the result of the first horizontal stretching and the second vertical cut. Fig.~\ref{fig:exampleRun3} shows the result after the second horizontal stretching, which is also the output of Step~\ref{s:h-stretch}. 

\begin{figure}[h!]
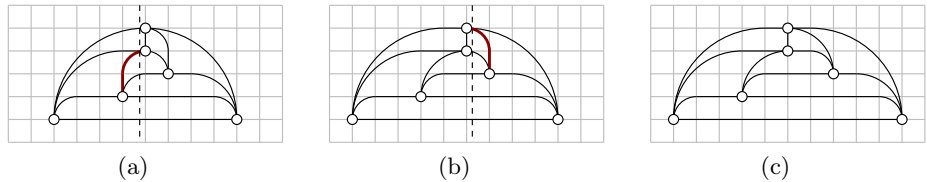

	\centering
	\begin{minipage}[b]{0.3\textwidth}
		\centering
		\subfloat[\label{fig:exampleRun1} { }]{
		\includegraphics[width=\textwidth,page=11]{canonical}}
	\end{minipage}
	\hfill
	\begin{minipage}[b]{0.3\textwidth}
		\centering
		\subfloat[\label{fig:exampleRun2} { }]{
		\includegraphics[width=\textwidth,page=12]{canonical}}
	\end{minipage}
	\hfill
	\begin{minipage}[b]{0.3\textwidth}
		\centering
		\subfloat[\label{fig:exampleRun3} { }]{
		\includegraphics[width=\textwidth,page=13]{canonical}}
	\end{minipage}
	\caption{%
	Example run of Step~\ref{s:h-stretch} of our drawing algorithm.}
	\label{fig:exampleRun-1}
\end{figure}

Fig.~\ref{fig:exampleRun-2} illustrates how we assign new $y$-coordinates to the vertices in Step~\ref{s:v-stretch} of our algorithm. In particular, Fig.~\ref{fig:exampleRun4} shows how this is done for the first three vertices. Figs.~\ref{fig:exampleRun5},~\ref{fig:exampleRun6} and~\ref{fig:exampleRun7} illustrate how the fourth, the fifth and the sixth vertex of the octahedron is attached to the drawing. The bold edges in each subfigure of Fig.~\ref{fig:exampleRun-2} are the ones defining the maximum (denoted by $(v_k,w^\ast)$)  in the description of the algorithm. The final drawing is the one of Fig.~\ref{fig:exampleRun7}.

\begin{figure}[t!]
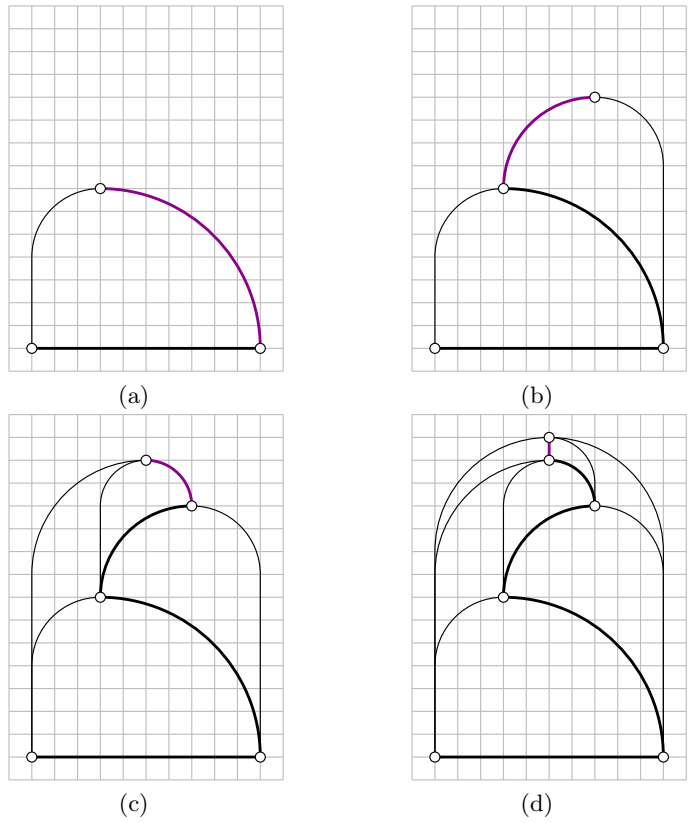

	\centering	
	\begin{minipage}[b]{0.3\textwidth}
		\centering
		\subfloat[\label{fig:exampleRun4} { }]{
		\includegraphics[width=\textwidth,page=14]{canonical}}
	\end{minipage}
	\hfil
	\begin{minipage}[b]{0.3\textwidth}
		\centering
		\subfloat[\label{fig:exampleRun5} { }]{
		\includegraphics[width=\textwidth,page=15]{canonical}}
	\end{minipage}
	
	\begin{minipage}[b]{0.3\textwidth}
		\centering
		\subfloat[\label{fig:exampleRun6} { }]{
		\includegraphics[width=\textwidth,page=16]{canonical}}
	\end{minipage}
	\hfil	
	\begin{minipage}[b]{0.3\textwidth}
		\centering
		\subfloat[\label{fig:exampleRun7} { }]{
		\includegraphics[width=\textwidth,page=17]{canonical}}
	\end{minipage}
	\caption{%
	Example run of Step~\ref{s:v-stretch} of our drawing algorithm.}
	\label{fig:exampleRun-2}
\end{figure}}%
\end{document}